\newcommand\xleftrightarrow[2][]{%
  \ext@arrow 9999{\longleftrightarrowfill@}{#1}{#2}}
\newcommand\longleftrightarrowfill@{%
  \arrowfill@\leftarrow\relbar\rightarrow}
\newtheorem{thm}{Theorem}
\newtheorem{lem}[thm]{Lemma}
\newtheorem{defn}[thm]{Definition}
\newtheorem{example}[thm]{Example}
\def\res{\mathop{\rm res}\limits}
\begin{document}

\title[Extended $\bigvee$-systems and extended affine Weyl orbit spaces]{Extended $\bigvee$-systems and almost-duality for extended affine Weyl orbit spaces}
\date{\today}

\author{Richard Stedman and Ian A.B. Strachan}

\address{School of Mathematics and Statistics\\ University of Glasgow\\Glasgow G12 8QQ\\ U.K.}
\email{richard.stedman@gmail.com}
\email{ian.strachan@glasgow.ac.uk}

\keywords{WDVV equations, $\bigvee$-conditions, Frobenius manifolds} \subjclass{}

\maketitle

\noindent {\bf Abstract:} Rational solutions of the Witten-Dijkgraaf-Verlinde-Verlinde (or WDVV) equations of associativity are given in terms a configurations of vectors which satisfy certain algebraic conditions known as $\bigvee$-conditions \cite{V}. The simplest examples of such configuration are the root systems of finite Coxeter groups. In this paper conditions are derived which ensure that an extended configuration - a configuration in a space one-dimension higher - satisfy
these $\bigvee$-conditions. Such a construction utilizes the notion of a small-orbit, as defined in \cite{S}. Symmetries of such resulting solutions to the WDVV-equations are studied; in particular, Legendre transformations. It is shown that these Legendre transformations map extended-rational solutions to trigonometric solutions and, for certain values of the free data, one obtains a transformation from extended $\bigvee$-systems to the trigonometric almost dual solutions corresponding to the classical extended affine Weyl groups.

\tableofcontents

\newpage

\section{Introduction}

\subsection{Background}

Solutions of the form
\begin{equation}
F=\frac{1}{4} \sum_{\alpha\in\mathcal{U}} h_\alpha \alpha(z)^2 \log \alpha(z)
\label{rational}
\end{equation}
of the Witten-Dijkgraaf-Verlinde-Verlinde (or WDVV) equations have appeared in many different settings, for example:
\begin{itemize}
\item[-] in Seiberg-Witten theory \cite{MMM};
\item[-] in the theory of almost-dual Frobenius manifolds \cite{D2};
\item[-] on the space of stability conditions \cite{B}.
\end{itemize}
Here $\mathcal{U}$ is a finite collection of covectors and $h_\alpha$ are numbers - known as multiplicities - associated to each covector. In the first examples of this type, the sets $\mathcal{U}$ were the collections $\mathcal{R}_W$ of root vectors of a finite Coxeter groups $W$ and in order to understand and generalize the construction of these solutions Veselov introduced the notion of a $\bigvee$-system \cite{V}. Roughly (and a precise definition will follow), in order for (\ref{rational}) to satisfy the WDVV equations (which are a set of over-determined partial differential equations) the covectors in
$\mathcal{U}$ have to satisfy certain {\sl algebraic} conditions, and if these are satisfied, the configuration is said to be a $\bigvee$-system.
If the multiplicities are positive they may be absorbed into the covectors, but if some are negative such a procedure breaks the geometry - or rather -  moves it into the complex, so in this paper we keep the multiplicities separate. Thus for us a $\bigvee$-system consists of {\sl pairs} $(\alpha\,,h_\alpha)$, consisting of a covector and its associated multiplicity, which satisfy the $\bigvee$-conditions (which are given in Definition \ref{Vdef}).

The origins of this paper was to understand the geometry of the configurations which define certain solution of the WDVV equation of almost-dual type. Starting with a superpotential (see Section \ref{hurwitz})
\begin{equation}
\lambda(z) = \left.\prod_{i=0}^N \left(z-z^i \right)^{k_i} \right|_{\sum_{i=0}^N k_i z^i=0}
\label{genA}
\end{equation}
one obtains the solution
\[
F=\frac{1}{4}\left. \sum_{i\neq j} k_i k_j (z^i-z^j)^2 \log(z^i-z^j) \right|_{\sum_{i=0}^N k_i z^i=0}
\]
with metric $g=\left.\sum_{i=0}^N k_i d(z^i)^2\right|_{\sum k_i z^i=0}$
and from the superpotential
\begin{equation}
\lambda(z) = z^s \prod_{i=1}^N \left(z^2-(z^i)^2 \right)^{k_i}
\label{genB}
\end{equation}
one obtains the solution
\[
F= \sum_{i=1}^N k_i (s+ 2 k_i) (z^i)^2 \log z^i + \frac{1}{2} \sum_{i\neq j} k_i k_j \left( z^i \pm z^j\right)^2 \log \left( z^i \pm z^j\right)\,.
\]
with metric $g=\sum_{i=1}^N k_i d(z^i)^2$ (note, this metric is independent of the value of $s$).
In the cases when $k_i=1$ for all $i$ these are the well-known $A_N$ and (for $s=0$) $B_N$ solutions, with a configuration being the root vectors of these Coxeter groups. Introducing multiplicities, either in the zeros or poles of the superpotential - depending on the sign of the $k_i$ - destroys this interpretation, and also introduces a split signature metric. From the analysis of the case when
\[
{\bf k}^{ext} = \{k\,, \underbrace{1\,,\ldots\,,1}_m\,,\underbrace{-1\,,\ldots\,,-1}_n\}
\]
it was found that the configuration could be interpreted as an extension into a perpendicular direction of the lower dimensional configuration
defined by
\[
{\bf k}= \{ \underbrace{1\,,\ldots\,,1}_m\,,\underbrace{-1\,,\ldots\,,-1}_n\}\,.
\]
The origin of this configuration - which defines a generalized root system - from a {\sl rational} superpotential aids in the interpretation of its symmetries: the superpotential is invariant under interchange of zeros and of poles. Isotropic roots can be interpreted as an interchange of zeros and poles

\subsection{Outline}

The purpose of this paper is three-fold. Firstly we construct extended $\bigvee$-systems. Starting with a $\bigvee$-system we extend the configuration into a one-dimension higher space by adding a one-dimensional orthogonal direction and adding certain special covectors to the original configuration. We then derive the (extra)-conditions required for this extended configuration to be a $\bigvee$-system. This construction utilizes the idea of a small-orbit, as introduced by Serganova \cite{S}. Thus extended $\bigvee$-systems are $\bigvee$-systems, but in one dimension higher than the original system.

Secondly, we perform a Legendre transformation on an extended $\bigvee$-system. Such Legendre transformations are symmetries of the WDVV-equations, and hence map solutions to solutions \cite{D1}. To perform such a transformation requires the choice of a direction, and for extended $\bigvee$ there is a natural choice of direction, namely the newly introduced orthogonal direction perpendicular to the original space. Using this direction for the Legendre transformation results in a transformation that maps rational solutions, i.e. those of the form (\ref{rational}), to trigonometric systems, that is, to solution of the form
\begin{equation}
F={\rm cubic~} + \sum_{\alpha \in \mathcal{U}} h_\alpha Li_3 (e^{\alpha(z)})\,
\label{trigonometric}
\end{equation}
where $Li_3(z)$ is the tri-logarithm function. A separate theory of trigonometric $\bigvee$-systems has been developed by M.Feigin \cite{F}.

Finally, we make the connection between extended $\bigvee$-systems and the almost-dual Frobenius manifolds for the extended affine Weyl group orbit spaces as constructed and studied in \cite{DZ} and \cite{DSZZ}. In particular the following is proved:

\begin{thm}
Let $W$ be a finite irreducible classical Coxeter group of rank $l$ and let $\widetilde{W}$ be the extended affine Weyl group of $W$ with arbitrary marked node. Then up to a Legendre transformation, the almost dual prepotentials of the classical extended affine Weyl group orbit spaces
$\mathbb{C}^{l+1}/\widetilde{W}$ are, for specific values of the free data, the extended $\bigvee$-systems of the $\bigvee$-system $R_W\,.$
\end{thm}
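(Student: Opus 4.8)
The plan is to prove this theorem by establishing an explicit correspondence between two families of WDVV solutions: the extended $\bigvee$-systems constructed in the first part of the paper, and the almost-dual prepotentials of the extended affine Weyl group orbit spaces $\mathbb{C}^{l+1}/\widetilde{W}$ from \cite{DZ,DSZZ}. Since the theorem is stated uniformly over all classical irreducible Coxeter groups $W$ (types $A$, $B/C$, $D$), the natural strategy is a case-by-case verification organized by Coxeter type, reducing the general claim to a finite set of explicit matchings once the general machinery is in place.

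First I would set up the two sides of the correspondence concretely. On one side, I would write down the extended $\bigvee$-system associated to the root system $R_W$: starting from the rational prepotential \eqref{rational} built on $R_W$, I would adjoin the orthogonal direction and the special (small-orbit) covectors, producing a rational solution in $\mathbb{C}^{l+1}$ whose extra conditions are guaranteed by the first part of the paper. On the other side, I would record the known almost-dual prepotentials for $\mathbb{C}^{l+1}/\widetilde{W}$ with an arbitrary marked node, which are trigonometric solutions of the form \eqref{trigonometric}. The key conceptual bridge is the Legendre transformation established earlier in the paper, which sends rational solutions to trigonometric ones by using the distinguished orthogonal direction; I would apply this transformation to the extended $\bigvee$-system and obtain a trigonometric solution to compare against the almost-dual prepotential.

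The main body of the argument is then a matching of data. I would carry this out by: (i) applying the Legendre transformation to the extended $\bigvee$-system and computing the resulting trigonometric prepotential in suitable flat coordinates; (ii) identifying the role of the marked node on the extended affine Weyl group side with a choice within the small-orbit/extension data on the $\bigvee$-system side; and (iii) fixing the free data (the multiplicities $h_\alpha$ and the extension parameters, e.g.\ the analogues of $k$ and $s$ appearing in the superpotentials \eqref{genA}, \eqref{genB}) so that the two prepotentials coincide. For type $A$ the comparison should follow from the superpotential \eqref{genA}, and for types $B$, $C$, $D$ from the folded/$\mathbb{Z}_2$-symmetric superpotential \eqref{genB}; this is where the interpretation of isotropic roots as interchanges of zeros and poles becomes the organizing principle for reading off the multiplicities.

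The hard part will be step (ii)--(iii): matching the marked-node data and pinning down the precise values of the free data for which the two solutions agree, since the almost-dual prepotentials depend on the marked node in a way that is not obviously parallel to the extension construction, and the Legendre-transformed trigonometric solution must be brought into the exact normalization (flat coordinates, cubic terms, and overall multiplicative constants) used in \cite{DZ,DSZZ}. I expect that reconciling these normalizations and verifying that the cubic part of \eqref{trigonometric} matches after the Legendre transformation will be the most delicate computational obstacle, and that the uniform statement across types will require separately confirming that the specific small-orbit covectors reproduce exactly the root data of $\widetilde{W}$ in each classical case.
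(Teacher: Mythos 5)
Your strategy is essentially the ``direct calculation'' route that the paper explicitly mentions and then deliberately avoids (``At this stage one could directly calculate the almost-dual prepotential from the trigonometric superpotential\dots However, we will take a different approach''). The paper's actual proof never computes the trigonometric side explicitly. It runs: (1) by \cite{DZ,DSZZ} the orbit space Frobenius manifold is realized as a Hurwitz space of trigonometric (resp.\ cosine--Laurent) superpotentials (\ref{AEWA}), (\ref{AEWB}) with primary differential $d\tilde{z}$; (2) a change of primary differential, tied to the flat coordinate $t^\circ$, turns these into the rational superpotentials (\ref{rat}), (\ref{ratB}) and induces a Legendre transformation between the two Frobenius manifolds; (3) the almost dual of the \emph{rational} Hurwitz manifold, read off from the factored superpotential via the residue formulas of Theorem \ref{superpotential}, is exactly the extended $\bigvee$-system with the stated data ${\bf k}^{ext}$; (4) by \cite{RS} the two almost duals are related by the \emph{twisted} Legendre transformation generated by $E\circ\partial_{t^\circ}$, and the paper's two lemmas show that this field is constant and perpendicular to $TV$, so the twisted transformation coincides with the ordinary Legendre transformation along the extended direction, i.e.\ the one already performed in Section \ref{Legendre}. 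The theorem then follows by composing the square, with no matching of explicit trigonometric prepotentials ever required.

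The genuine gap in your plan sits at your steps (ii)--(iii). First, the almost-dual prepotentials you propose to ``record'' are not available off the shelf: for types $B$, $C$, $D$ with arbitrary marked node they are not written down in \cite{DSZZ} or elsewhere, so the only systematic way to produce them is through the same superpotential/residue machinery --- your route quietly presupposes the Hurwitz-space apparatus it claims to bypass. Second, even granting explicit expressions on both sides, your matching is a case-by-case verification of an identity between two complicated trigonometric solutions, with nothing explaining \emph{why} the distinguished direction $\partial_{z^\perp}$ is the correct Legendre direction, nor why the marked node $k$ should surface as the multiplicities $h_w=-(l+1-k)$ (type $A$) or $h_w=-2k$, $h_n=-2k(s+2k)$ (type $C$). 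The paper's twisted-Legendre lemma ($E\circ\partial_{t^\circ}$ constant and orthogonal to $TV$) is precisely the missing idea: it simultaneously selects the direction, forces the identification of the free data through the superpotential exponents, and removes the need to compute the trigonometric side at all. Without that lemma, or without actually completing the brute-force computation you only sketch and concede is ``the most delicate computational obstacle,'' the proposal is a plausible plan but not yet a proof.
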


\noindent A more precise version of the Theorem - with the full data specified - will be given in Section 5.

\subsection{$\bigvee$-systems and small orbits}

We start by defining a $\bigvee$-system and the $\bigvee$-conditions.
Let $V$ be a real vector space and $\mathcal{U} \subset V^*$ a finite set of covectors which span the dual space $V^*\,.$
With this one defines a metric by the formula
\begin{equation}
\sum_{\alpha \in \mathcal{U}} h_\alpha \alpha(x) \alpha(y) = h_\mathcal{U} \, (x,y)\,.
\label{canonicalmetric}
\end{equation}
Here $h_\alpha$ is the multiplicity of the covector $\alpha$, which may be negative or non-integer. It is therefore an additional assumption that the metric
$(x,y)$ defined in this way is non-degenerate: it certainly can be non-positive definite (see example \ref{basicexample}).
The constant $h_\mathcal{U}$ is technically redundant - it could be absorbed into the multiplicities; however, it is useful to keep this freedom.

\begin{example}
With the standard representation of the roots and metric for the Coxeter group $A_n$,
\begin{eqnarray*}
\mathcal{U} & = & \{ e_i - e_j \,, 0 \leq i\,,j \leq n\,, i \neq j \}\,,\\
g & = & \left.\sum_{i=0}^n \left(dz^i\right)^2\right|_{\sum z^i=0}\,,
\end{eqnarray*}
one has
\[
\sum_{\alpha\in\mathcal{U}} \alpha(x) \alpha(y) = 2 (n+1) (x,y)
\]
so one takes the normalization $h_\alpha=1$ (the roots form a single orbit) which forces $h_\mathcal{U} = {\rm (dual)~Coxeter~number}\,.$
\end{example}

Again following \cite{V} this metric defines the isomorphism $\varphi_{\mathcal{U}}:V \rightarrow V^*$ and we denote $\varphi_\mathcal{U}^{-1}(\alpha) = \alpha^\vee\,.$

\begin{defn}\label{Vdef}
The system $\mathcal{U}$ is a $\bigvee$-system if the following relations are satisfied:
\[
\sum_{\beta \in \Pi \cap \mathcal{U}} h_\beta \beta(\alpha^\vee) \beta^\vee = \lambda \alpha^\vee
\]
for each $\alpha \in \mathcal{U}$ and each two-plane $\Pi \subset V^*$ containing $\alpha$. The constant $\lambda$ will in general depend on both $\Pi$ and $\alpha\,.$
\end{defn}

In the following example the multiplicities are both positive and negative, and this drastically effects to geometry - the natural metric is no longer positive definite. The example also shows how one can interpret the $\bigvee$-system: one can think of it as the extension of a one-dimensional $\bigvee$-system associated to the Coxeter group $A_1$ into two dimensions.

\begin{example}\label{basicexample}
 Consider the following solution to the WDVV equations \cite{RS},
 \[
 F^\star = \frac{1}{4} \left( x^2 \log x + y^2 \log y - (x-y)^2 \log(x-y) \right)
 \]
 with $g=2 dx\,dy\,.$ This solution is the almost dual to the Frobenius manifold defined by the prepotential
 \[
 F=\frac{1}{2}t_1^2 t_2 + t_2^2 \log t_2\,.
 \]
 The configuration of vectors
 $\{ \pm (1,0)\,,\pm (0,1)\,, \pm (1,-1)\}$ seems somewhat asymmetric.

\begin{figure}[h!]
  \includegraphics[width=4cm,angle=90]{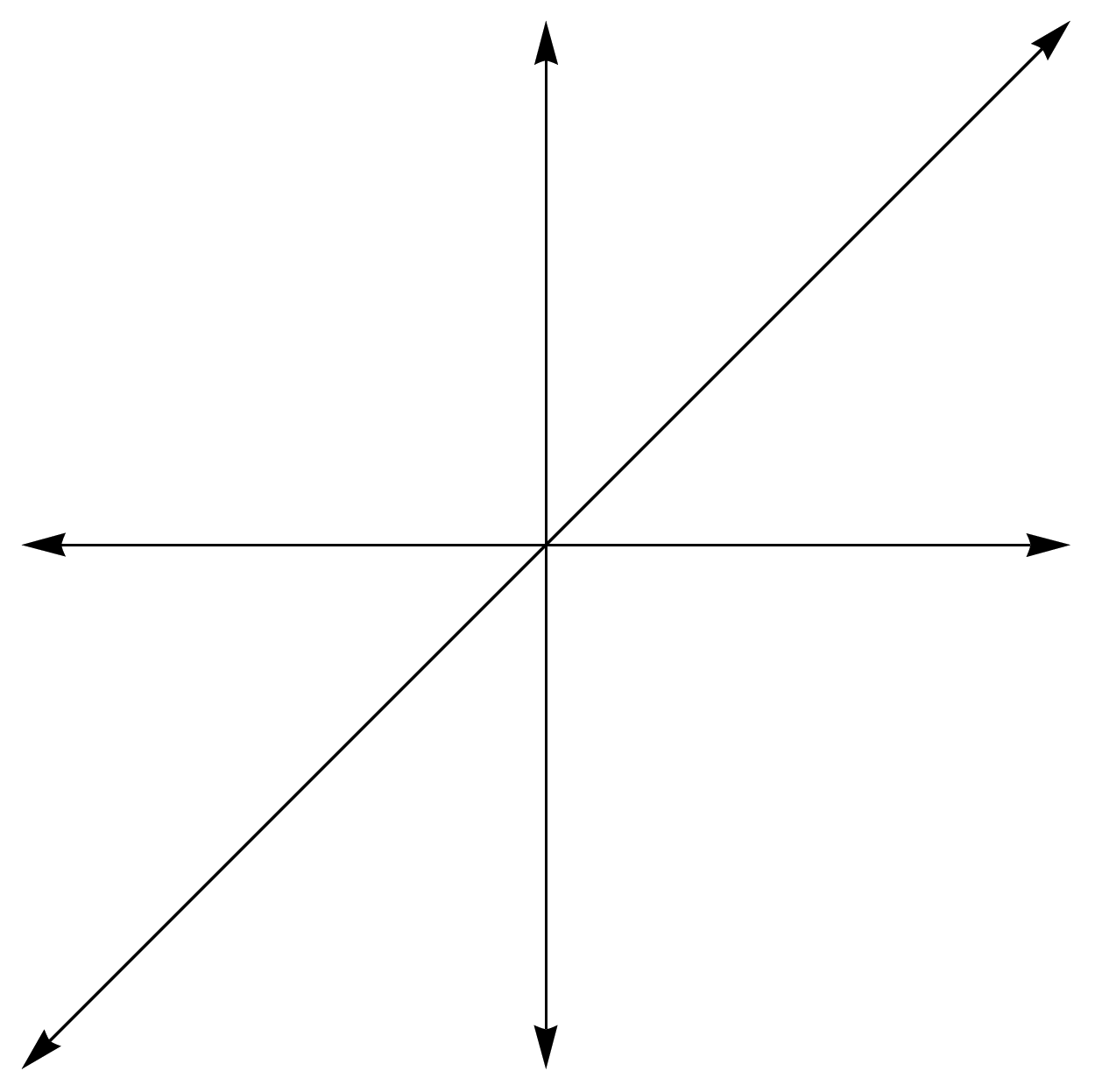}
  \caption{The geometry of the $\bigvee$-configuration}
  \label{fig1}
\end{figure}

\noindent However this does not take into account the split signature of the metric. If one rotates the diagram and superimposes the light-cone, this illuminates the geometry of the configuration:

 \begin{figure}[h!]
  \includegraphics[width=5cm]{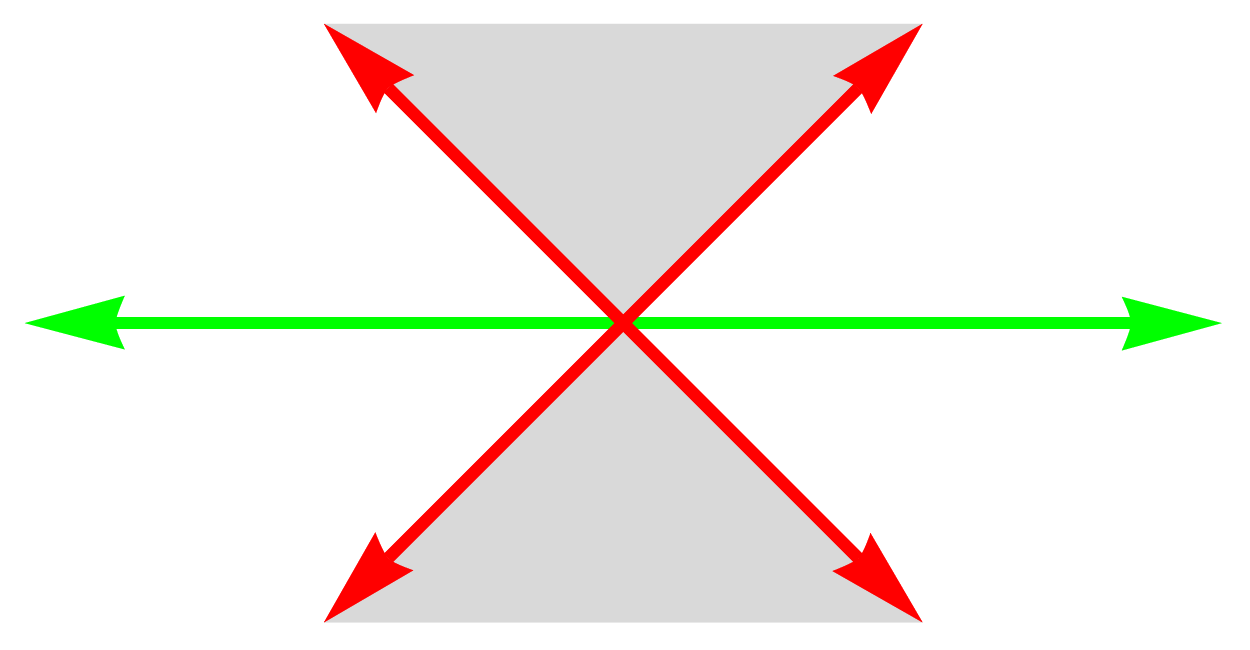}
  \caption{The geometry of the $\bigvee$-configuration, rotated with superimposed light cone}
  \label{fig1}
\end{figure}

 \noindent The vectors $\{ \pm (1,0)\,, \pm (0,1) \}$ are null and the vectors $\pm(1,-1)$ are spacelike.

 \end{example}

This example also motivates the main construction. We start with a configuration $\mathcal{U}=\{\pm(1,-1)\}$ spanning a space $V$ (i.e. the configuration is the root system for the Coxeter group $A_1$) and extend into a perpendicular
direction defined by the normal vector $n=\frac{1}{2} (1,1)\,.$ With this one obtains an extended space $V^{ext}\,,$ and an extended space of configurations $\mathcal{U}^{ext}$
 may be constructed by extending {\sl certain} vectors into the perpendicular direction, so
 \begin{eqnarray*}
 (0,1) & = & \frac{1}{2} (-1,1) + n \,,\\
 (1,0) & = & \frac{1}{2} (1,-1) + n\,.
 \end{eqnarray*}
What makes the vectors $\pm\frac{1}{2} (1,-1)$ special is the following important property: their difference lies in $\mathcal{U}\,.$ This is known as the small orbit property
and its existence will be crucial to what follows.

Thus given a space $V$ and configuration $\mathcal{U}$ we extend into a perpendicular space $V^{ext}$ and form a new configuration $\mathcal{U}^{ext}$ by extending vectors
in the small orbit of $\mathcal{U}$. Imposing the $\bigvee$-conditions on the extended configuration then constrains the various objects, notably the constants $h_a$ associated to each covector.

 \begin{defn} Let $\mathcal{U}$ be a $\bigvee$-system.

 \begin{itemize}

 \item[(a)]
 A {\sl small orbit} $\vartheta_s$ of the $\bigvee$-system is a finite set of covectors such that
 \[
 w_1-w_2 \in \mathcal{U}
 \]
 for all $w_1\,,w_2\in \vartheta_s\,$ with $w_1 \neq w_2\,;$

 \item[(b)] An invariant small orbit consists of pairs $(w,h_w)$, where $w$ is a small orbit covector with associated multiplicity $h_w$, which satisfies the additional conditions:
 \begin{itemize}
 \item[(i)] $\sum_{w \in \vartheta_s} h_w w(z)^2 = h_s (z,z)\,,$
 \item[(ii)] $\sum_{w\in \vartheta_s} h_w w(z) =0$
 \end{itemize}
 for all $z \in V\,.$
 \end{itemize}
 \end{defn}

 \noindent The first part of the definition is just an adaption of the concept of a small orbit for Weyl groups \cite{S}, and the adjective \lq orbit\rq~reflects this origin. In applications this set could be invariant under the action of the Weyl group (and hence a bone-fide orbit), but even if there is no such group we keep this adjective. In the second part of the definition, the adjective \lq invariant\rq~is used since, if $\mathcal{U}$ is a Coxeter configuration (i.e. the root system of a Coxeter group, with multiplicities equal on each orbit), the two conditions in part (b) are, by basic properties of invariant theory, automatically satisfied.

 \medskip

Small orbits for Coxeter configurations were introduced and classified by Serganova \cite{S}:

\begin{thm}\label{Serg} Let $\omega_i$ be the fundamental weight vectors for a finite Coxeter group $W\,.$ The small orbits of rank $\geq 2$ are given by:
\begin{enumerate}
\item $A_n\,: \vartheta_s = W \omega_1 {\rm ~or~} W \omega_n\,;$
\item $B_n\,,BC_n\,,C_n (n\geq 4)\,: \vartheta_s=W \omega_1\,;$
\item $B_2\,: \vartheta_s = W \omega_1\,;$
\item $B_3\,: \vartheta_s = W \omega_1 {\rm ~or~} W \omega_3\,;$
\item $BC_2\,: \vartheta_s = W \omega_1 {\rm ~or~} 2W \omega_2\,;$
\item $C_3\,, BC_3 \,: \vartheta_s = W \omega_1\,;$
\item $D_n (n\geq3\,, n\neq 4)\,: \vartheta_s = W \omega_1\,;$
\item $D_4\,; \vartheta = W \omega_1\,, W\omega_3 {\rm~or~} W\omega_4\,;$
\item $G_2\,: \vartheta_s = W \omega_1\,.$
\end{enumerate}
\end{thm}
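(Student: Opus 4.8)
The plan is to exploit the $W$-invariance of the root system $\mathcal{U}$ in order to reduce the defining condition --- that \emph{every} pairwise difference of distinct orbit covectors lie in $\mathcal{U}$ --- to a condition at a single base point. First I would observe that if $w_1,w_2\in W\omega$ then there is $g\in W$ with $g w_1=\omega$, the dominant representative, and since $\mathcal{U}$ is $W$-stable, $w_1-w_2=g^{-1}(\omega-g w_2)$ lies in $\mathcal{U}$ if and only if $\omega-g w_2$ does. Hence $W\omega$ is a small orbit precisely when $\omega-\mu\in\mathcal{U}\cup\{0\}$ for every $\mu\in W\omega$. This converts a classification over all pairs into the study of the finitely many differences $\omega-\mu$ as $\mu$ ranges over the orbit of a single dominant weight.

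Next I would extract a strong necessary condition by testing this against the simple reflections. Taking $\mu=s_{\alpha_i}\omega=\omega-\langle\omega,\alpha_i^\vee\rangle\,\alpha_i$ forces $\langle\omega,\alpha_i^\vee\rangle\,\alpha_i\in\mathcal{U}$ for every simple root $\alpha_i$; in a reduced system this pins $\langle\omega,\alpha_i^\vee\rangle$ to $\{0,1\}$, so that $\omega$ is a sum of fundamental weights with $0/1$ coefficients, while in the non-reduced case $BC_n$ the possibility that $2\alpha_i\in\mathcal{U}$ is exactly what admits the doubled orbit $2W\omega_2$ that appears for $BC_2$. The full difference condition is then used to discard the proper sums, leaving the single fundamental weights $\omega_j$ as the candidates to be tested, which is the structural reason the minuscule-type weights are the natural contenders.

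The heart of the argument is a length/inner-product computation. Since $\mu,\nu\in W\omega$ satisfy $|\mu|=|\nu|=|\omega|$, a difference $\mu-\nu$ of root length $\ell$ forces $\langle\mu,\nu\rangle=|\omega|^2-\tfrac12\ell^2$; in a simply-laced system, where $\ell^2$ is constant, this says the orbit is a two-distance set with a single prescribed off-diagonal Gram entry, and combined with $\sum_{\mu\in W\omega}\mu=0$ (valid as the root system is irreducible) it rigidly constrains the orbit geometry. Matching these constraints against the explicit root and weight realizations type by type then isolates exactly the orbits listed: $\omega_1$ and $\omega_n$ for $A_n$, the vector weight $\omega_1$ for $D_n$, and so on. In the multiply-laced and non-reduced cases I would track short versus long root lengths separately, which is what differentiates the answers for $B_n$, $C_n$ and $BC_n$ and produces the extra possibilities in low rank.

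I expect the main obstacle to be precisely this last, type-dependent verification rather than the structural reductions. The simply-laced families succumb cleanly to the two-distance characterization, but the multiply-laced systems require careful bookkeeping of which differences realize short roots and which realize long roots, and the small-rank coincidences --- $B_2$ versus $BC_2$, the triality-related alternatives $W\omega_1,W\omega_3,W\omega_4$ for $D_4$, the split between $B_3$ and $C_3/BC_3$, and $G_2$ --- do not follow from a single uniform formula and must be checked by direct computation in each realization. Handling the non-reduced root system $BC_n$, where a covector and its double can both belong to $\mathcal{U}$, is the subtlest point, since it is responsible both for the appearance of $2W\omega_2$ and for the qualitative difference between the $B$, $C$ and $BC$ answers.
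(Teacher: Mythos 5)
The paper offers no proof of this statement: Theorem \ref{Serg} is quoted verbatim from Serganova \cite{S}, so your attempt can only be judged on its own merits, not against an internal argument. Your structural reductions are sound and are indeed the standard route: the base-point reduction using $W$-stability of $\mathcal{U}$, the simple-reflection test forcing $\langle\omega,\alpha_i^\vee\rangle\,\alpha_i\in\mathcal{U}\cup\{0\}$ and hence coefficients in $\{0,1\}$ (or $\{0,1,2\}$ when $2\alpha_i$ is a root, which correctly accounts for $2W\omega_2$ in $BC_2$), the Gram-entry computation from equal norms, and $\sum_{\mu\in W\omega}\mu=0$ by irreducibility are all correct, and you rightly identify the type-by-type verification as the real content.

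However, there is a genuine flaw, not merely a deferred computation: your reduction criterion, that $W\omega$ is small precisely when $\omega-\mu\in\mathcal{U}\cup\{0\}$ for \emph{every} $\mu\in W\omega$, is too strong and, applied honestly, would disprove most of the list. Every entry of what the paper calls $B$-type contains antipodal pairs $\{w,-w\}$, and the difference $2w$ is a root only in the $C$ and $BC$ cases: for $B_n$ one has $W\omega_1=\{\pm e_i\}$ with $e_1-(-e_1)=2e_1\notin\mathcal{R}_{B_n}$, and likewise $2e_i\notin\mathcal{R}_{D_n}$, twice a spinor weight of $B_3$ or a half-spinor weight of $D_4$ is not a root, twice a short root of $G_2$ is not a root, and $2(e_1+e_2)$ is not a root of $BC_2$. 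So your case-checking would wrongly discard the entries for $B_n$, $B_2$, $B_3$, $D_n$, $D_4$, $G_2$ and the orbit $2W\omega_2$ of $BC_2$, retaining only the $A$, $C$ and $BC$ entries. The missing idea is that in Serganova's classification the difference condition is imposed only on non-proportional pairs, i.e.\ antipodal pairs are exempt; this reading is also the one the paper implicitly uses, since its $A$-type/$B$-type dichotomy ($-w\notin\vartheta_s$ versus $-w\in\vartheta_s$) and the separate $B$-type constraint in Theorem \ref{basictheorem} --- where the plane through $\pm(w+n)$ and $\pm n$ is controlled by a root $\alpha$ \emph{proportional} to $w$, not by $2w\in\mathcal{U}$ --- only make sense with that exemption. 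Once you build it in, your Gram-matrix observation also needs adjusting (one prescribed off-diagonal value for non-antipodal pairs, plus the exempt value $-|\omega|^2$), and with that correction your outline becomes a viable, if still largely unexecuted, reconstruction of Serganova's argument.
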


\noindent It is interesting to note that the exceptional Coxeter groups do not have any small orbits. Since the constructions in this paper rely on the existence of such small orbits they do not apply to all of the exceptional groups.

The main construction in the first part of this paper may be explained by the following example.

\begin{example}

We begin with the root system (automatically a $\bigvee$-system) for the Coxeter group $A_2\,.$ This is shown in the left-hand diagram in Figure \ref{fig:A2}. The small-orbit is given, by Theorem \ref{Serg}, by the orbit of a weight vector, and this orbit is shown, superimposed on the root system, in the right-hand diagram in Figure \ref{fig:A2}.

\begin{figure}[h!]
  \includegraphics[width=8cm]{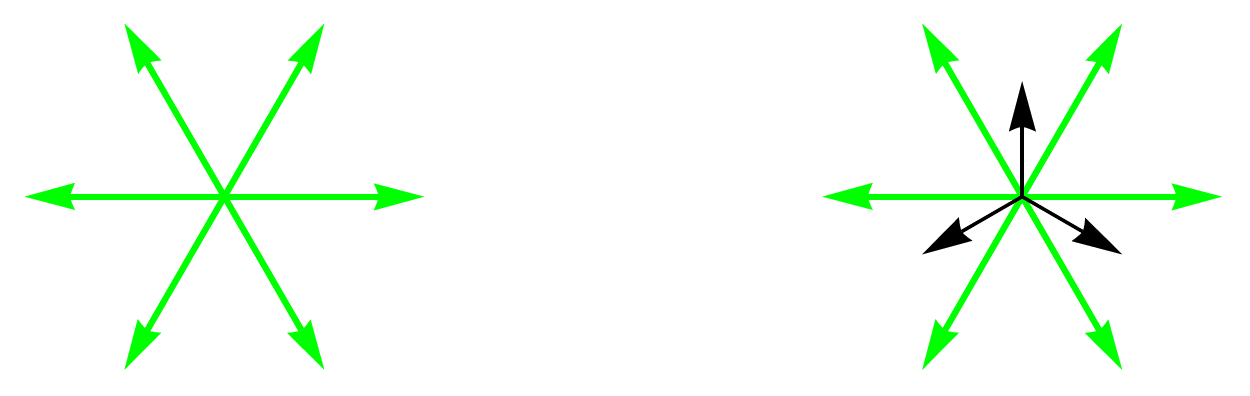}
  \caption{The roots of $A_2$ and the small orbit superimposed on the roots}
  \label{fig:A2}
\end{figure}

\noindent We now extend the configuration into a third dimension by adding a normal vector to the end of each small-orbit vector and adding its negative. For  $A_2$ this is shown in Figure \ref{fig:fig2}.

\begin{figure}[h!]
  \includegraphics[width=8cm]{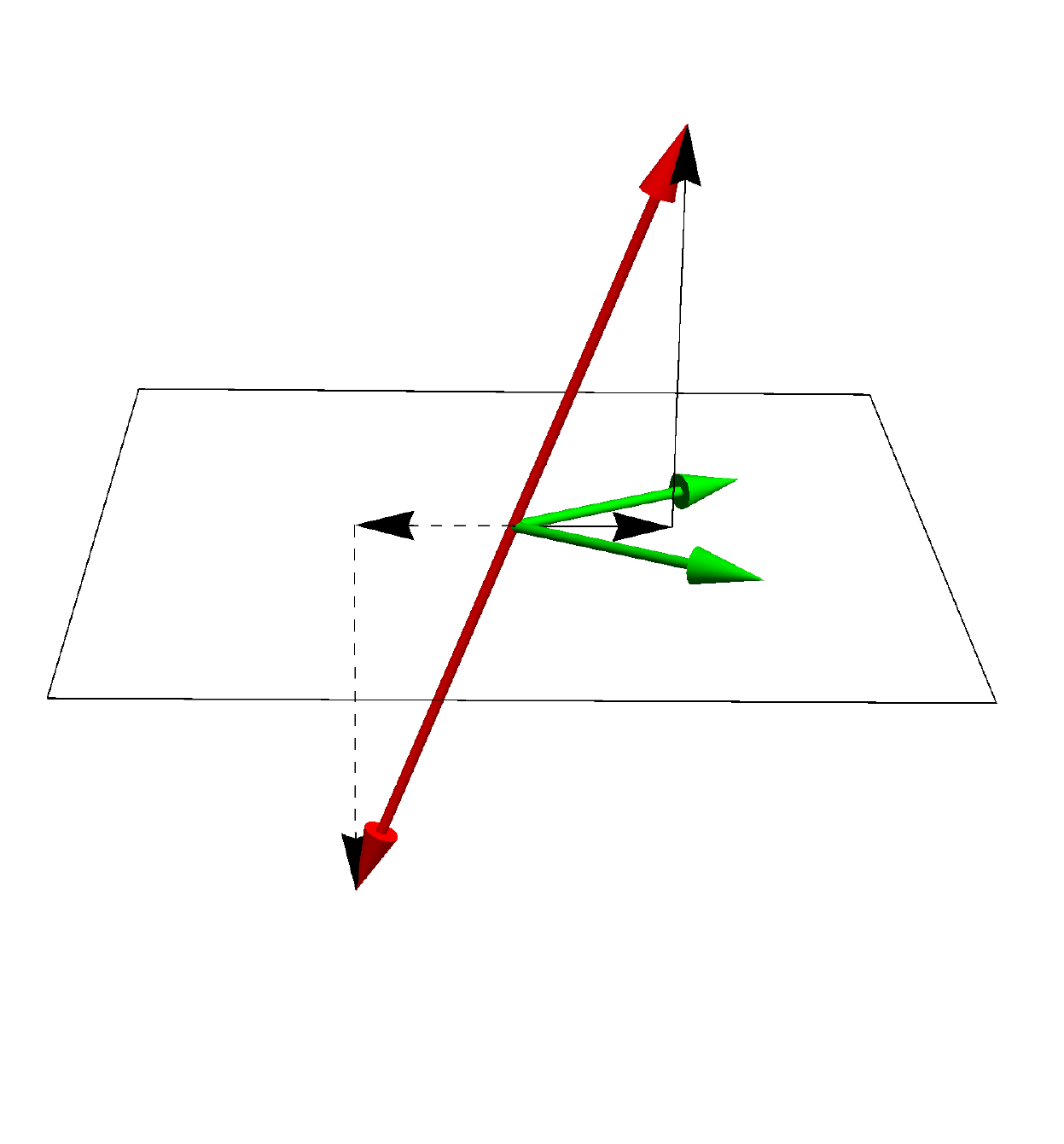}
  \caption{The (partial) construction of an extended $\bigvee$-system}
  \label{fig:fig2}
\end{figure}

\noindent Repeating the construction gives an extended configuration. For $A_2$ this is shown in Figure \ref{fig:fig3}.

\begin{figure}[h!]
  \includegraphics[width=8cm]{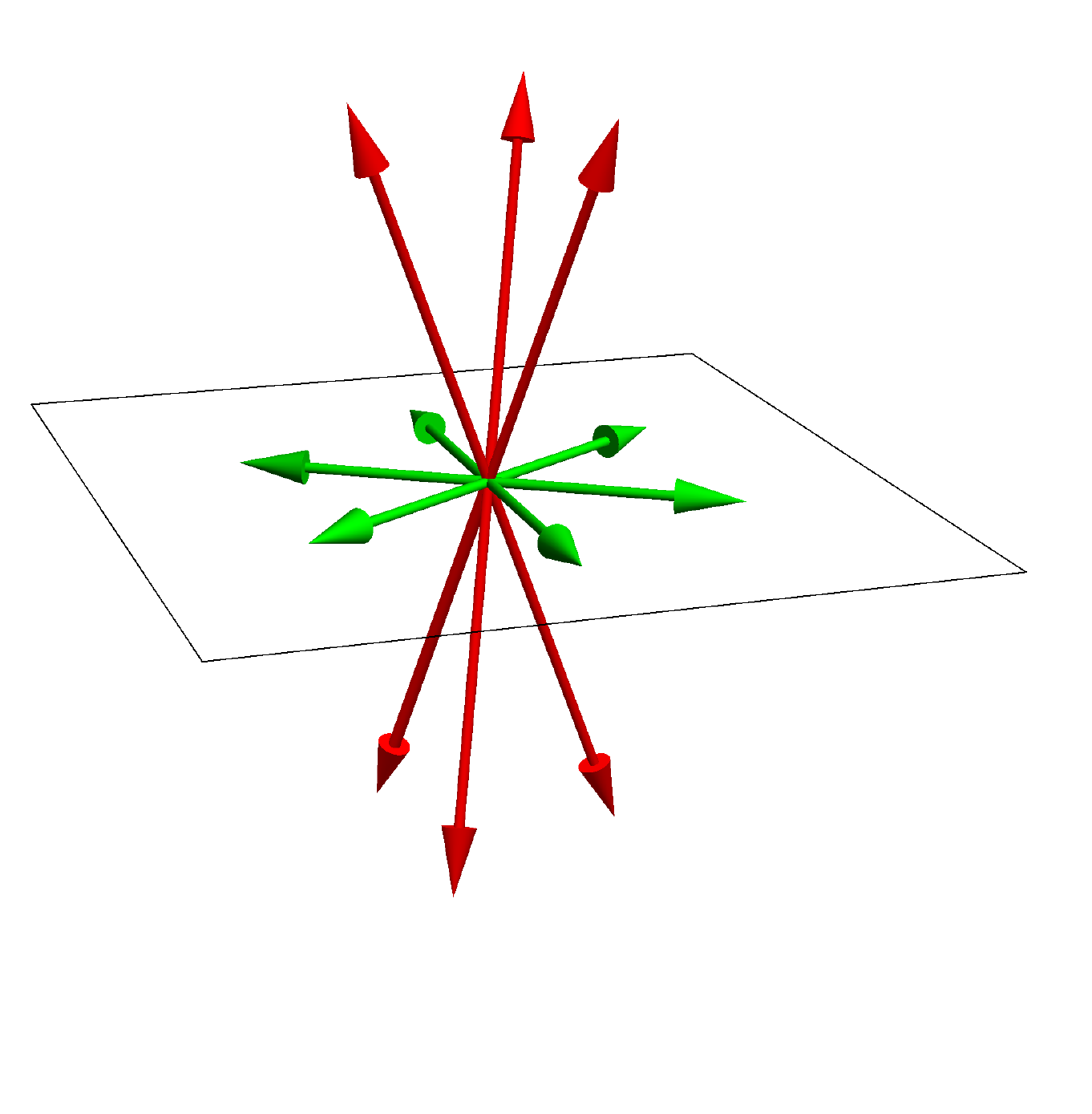}
  \caption{An extended $\bigvee$-system}
  \label{fig:fig3}
\end{figure}

\medskip

\noindent As in the previous example, the metric in the 3-dimensional space, and in particular its signature, depends on the geometry and multiplicities of these new extended vectors.
\end{example}

\noindent This extended configuration is not, at present, a $\bigvee$-system. One now imposes the $\bigvee$-conditions to obtain algebraic conditions which restrict the new data. It is at this stage that the small-orbit condition comes directly into play - it enables one to understand the 2-dimensional configurations on each two-plane $\Pi\,.$ These algebraic conditions are given in Lemma \ref{metriclemma} and Theorem \ref{basictheorem}.

This construction has, so far, not resulted in any new $\bigvee$-systems. But the results shed a new light on the geometry of the configurations and their origins which enables the link with extended-affine Weyl groups, their Frobenius and almost-dual Frobenius manifolds, and trigonometric $\bigvee$-systems, to be made.

\section{Extended $\bigvee$-systems}\label{section2}

\subsection{Extended configurations}

We begin by extending $V$ by a 1-dimensional space $V^\perp\,,$
\[
V^{ext} = V \oplus V^\perp
\]
where $V^\perp={\rm span}\{n^\vee\}$ and $V$ and $V^\perp$ are perpendicular subspaces of $V^{ext}\,.$ The metric $(\,,)^{ext}$
on $V^{ext}$ is determined by the original metric on $V$ and the value $(n^\vee,n^\vee)^{ext}$ which defines the perpendicular scale. Thus
\begin{equation}
(z_o+z^\perp,z_o+z^\perp)^{ext} = (z_o,z_o) + (z^\perp,z^\perp)^{ext}\,, \qquad z_o \in V\,,z^\perp \in V^\perp\,.
\label{extendedmetric}
\end{equation}
With this one can define the covector $n\,.$ Once this extended space has been defined one can define the extended configuration $\mathcal{U}^{ext}\,.$

\begin{defn} Let $\mathcal{U}$ be a $\bigvee$-system with an invariant small orbit $\vartheta_s\,.$ The extended configurations $\mathcal{U}^{ext}$ are defined as:
\[
\mathcal{U}^{ext} = \mathcal{U} \cup \{ \pm (w+n)\,,w \in \vartheta_s\} \cup \{ \pm n \}\,.
\]
We exclude reducible configurations, i.e. trivial extensions of the type $\mathcal{U} \cup  \{ \pm n \}$.
The corresponding multiplicities for the new covectors will be denoted $h_w$ (corresponding to the new covectors $\pm(w+n)$ and $h_n$ (corresponding to the new covectors $\pm n$).
\end{defn}

We now have two metrics on $V^{ext}$, the canonical metric given by (\ref{canonicalmetric}) (now summed over the extended configuration) and the orthogonal decomposition given by (\ref{extendedmetric}). The following Lemma gives necessary and sufficient conditions for these to be equal.

\begin{lem} \label{metriclemma}

Let $\mathcal{U}$ be a $\bigvee$-system with an invariant small orbit $\vartheta_s\,.$
The two metrics agree, i.e.
\[
h_{\mathcal{U}^{ext}} (x,y)^{ext} = \sum_{\alpha \in \mathcal{U}^{ext}} h_\alpha \, \alpha(x) \alpha(y)\,, \qquad x\,,y \in V^{ext}\,,
\]
if and only if
\[
h_{\mathcal{U}} + 2 h_{\vartheta_s} = 2 \{ h_n + \sum_{w\in \vartheta_s } h_w \} (n^\vee,n^\vee)\,.
\]
With this, $h_{\mathcal{U}^{ext}} = h_\mathcal{U} + 2 h_s\,.$
\end{lem}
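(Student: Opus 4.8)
The plan is a direct bilinear computation followed by a matching of the two metrics block by block. Fix $x,y\in V^{ext}$ and decompose them orthogonally as $x=x_o+x^\perp$, $y=y_o+y^\perp$ with $x_o,y_o\in V$ and $x^\perp,y^\perp\in V^\perp$. Since every $\alpha\in\mathcal{U}$ annihilates $V^\perp$ and the covector $n$ (defined by $n(z)=(n^\vee,z)^{ext}$) annihilates $V$, one has $\alpha(x)=\alpha(x_o)$ and $n(x)=n(x^\perp)$. The strategy is to expand the canonical sum $\sum_{\alpha\in\mathcal{U}^{ext}}h_\alpha\,\alpha(x)\alpha(y)$ over the three pieces in the definition of $\mathcal{U}^{ext}$, to simplify using the invariant small-orbit conditions, and then to compare the result with the orthogonal decomposition (\ref{extendedmetric}).

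First I would split the sum according to $\mathcal{U}^{ext}=\mathcal{U}\cup\{\pm(w+n)\}\cup\{\pm n\}$. The contribution of $\mathcal{U}$ is $\sum_{\alpha\in\mathcal{U}}h_\alpha\,\alpha(x_o)\alpha(y_o)=h_\mathcal{U}(x_o,y_o)$ by (\ref{canonicalmetric}). Each antipodal pair $\pm(w+n)$ contributes $2h_w\,(w+n)(x)(w+n)(y)$ and the pair $\pm n$ contributes $2h_n\,n(x^\perp)n(y^\perp)$. Expanding $(w+n)(x)(w+n)(y)=w(x_o)w(y_o)+w(x_o)n(y^\perp)+n(x^\perp)w(y_o)+n(x^\perp)n(y^\perp)$ and summing over $w\in\vartheta_s$ with weights $h_w$, the two cross terms vanish by the invariance condition (ii), i.e. $\sum_w h_w w(z)=0$, while the polarization of condition (i) gives $\sum_w h_w w(x_o)w(y_o)=h_s(x_o,y_o)$.

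The surviving $n$-terms all involve $n(x^\perp)n(y^\perp)$. Here I use that $V^\perp$ is one-dimensional: writing $x^\perp,y^\perp$ as multiples of $n^\vee$ gives the rank-one identity $n(x^\perp)n(y^\perp)=(n^\vee,n^\vee)^{ext}(x^\perp,y^\perp)^{ext}$. Setting $N=(n^\vee,n^\vee)^{ext}$ and collecting everything,
\[
\sum_{\alpha\in\mathcal{U}^{ext}}h_\alpha\,\alpha(x)\alpha(y)=(h_\mathcal{U}+2h_s)(x_o,y_o)+2N\Big(h_n+\sum_{w\in\vartheta_s}h_w\Big)(x^\perp,y^\perp)^{ext}.
\]
On the other hand $h_{\mathcal{U}^{ext}}(x,y)^{ext}=h_{\mathcal{U}^{ext}}(x_o,y_o)+h_{\mathcal{U}^{ext}}(x^\perp,y^\perp)^{ext}$ by (\ref{extendedmetric}). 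Evaluating the desired equality first on $x,y\in V$ and then on $x,y\in V^\perp$, and using the non-degeneracy of the metric on each factor, the two expressions agree for all $x,y$ if and only if $h_{\mathcal{U}^{ext}}=h_\mathcal{U}+2h_s$ and $h_{\mathcal{U}^{ext}}=2N\,(h_n+\sum_w h_w)$ hold simultaneously. Eliminating $h_{\mathcal{U}^{ext}}$ yields precisely the stated identity $h_\mathcal{U}+2h_{\vartheta_s}=2\{h_n+\sum_w h_w\}(n^\vee,n^\vee)^{ext}$, and the first equality gives the final claim $h_{\mathcal{U}^{ext}}=h_\mathcal{U}+2h_s$.

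The computation is essentially routine once the bookkeeping is fixed; the one point requiring care — and the place where the \emph{invariant} hypothesis on the small orbit is indispensable — is the vanishing of the cross terms. Condition (ii) is exactly what guarantees that the canonical metric of $\mathcal{U}^{ext}$ carries no off-diagonal $V$--$V^\perp$ mixing, so that it can even possibly equal the block-diagonal metric (\ref{extendedmetric}); without it the two metrics could never coincide. Beyond verifying the polarization of (i) and the rank-one identity on the one-dimensional space $V^\perp$, I anticipate no genuine obstacle.
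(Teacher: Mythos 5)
Your proof is correct and takes essentially the same route as the paper's: decompose into $V\oplus V^\perp$, expand the canonical sum over the three pieces of $\mathcal{U}^{ext}$, kill the cross terms with invariance condition (ii), evaluate the small-orbit sum with (i), use one-dimensionality of $V^\perp$ for the rank-one identity $n(x^\perp)n(y^\perp)=(n^\vee,n^\vee)^{ext}(x^\perp,y^\perp)^{ext}$, and match blocks. The only (immaterial) difference is that you work bilinearly throughout, polarizing condition (i), whereas the paper proves the diagonal case $x=y$ and polarizes at the end.
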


\begin{proof} We prove this in the case $x=y\,,$ the full result then follows from polarization. Decomposing $z=z_o+z^\perp$ gives
\begin{eqnarray*}
\sum_{\alpha \in \mathcal{U}^{ext}} h_\alpha \alpha(z)^2 & = & \sum_{\alpha \in \mathcal{U}} h_\alpha \alpha(z)^2 + \sum_{w\in \vartheta_s}\left( h_w \left[ (w+n)(z) \right]^2 + h_w\left[-(w+n)(z)\right]^2 \right)  \\ &&+h_n\left[n(z)\right]^2 + h_n \left[-n(z)\right]^2\,,\\
&=& \sum_{\alpha \in \mathcal{U}} h_\alpha \alpha(z_o)^2 + 2 \sum_{w\in\vartheta_s} h_w \left[ w(z_o) + n(z^\perp)\right]^2 + 2 h_n n(z^\perp)^2\,,\\
&=& \left(h_\mathcal{U} + 2 h_s \right) (z_o,z_o) + ( 2 h_n + \sum_{w\in\vartheta_s} h_w ) n(z^\perp)^2\,
\end{eqnarray*}
where the invariant conditions have been used to derive the last line.

Since ${\rm dim} V^\perp =1\,, z^\perp= \mu n^\vee$ for some scalar $\mu\,,$ so
\[
(n^\vee,z^\perp)^2 = (n^\vee,n^\vee) (z^\perp,z^\perp)\,.
\]
Thus
\[
\sum_{\alpha \in \mathcal{U}^{ext}} h_\alpha \alpha(z)^2 = \left(h_\mathcal{U} + 2 h_s \right) (z_o,z_o) + ( 2 h_n + \sum_{w\in\vartheta_s} h_w ) (n^\vee,n^\vee) (z^\perp,z^\perp)\,.
\]
Since $(z,z)=(z_o,z_o) + (z^\perp,z^\perp)$ the result follows.
\end{proof}

\begin{example}\label{AnEx}
Let $\mathcal{U}=\mathcal{R}_{A_n}\,.$ We assume for now (these conditions will follow from the imposition of the $\bigvee$-conditions) that $h_n=0\,,h_\alpha=1$ for $\alpha\in \mathcal{R}_{A_n}$ (this fixes
$h_\mathcal{U} = 2(n+1)\,,$ the (dual) Coxeter number of $A_n$), and $h_w={\rm constant}$ for $w \in\vartheta_s\,.$ There are two (separate) small orbits - the orbit (under the $A_n$-group) of the highest and lowest weight vectors (the two families reflecting the symmetry of the Coxeter/Dynkin diagram). We first find the constant $h_s\,.$ From symmetry/invariant theory it follows that
\begin{eqnarray*}
\sum_{w \in \vartheta_s} h_w w(z_o)^2 &=& h_s (z_o,z_o)\,,\\
\sum_{w\in \vartheta_s} h_w w(z_o) &=&0
\end{eqnarray*}
for $z_o\in V\,.$ Thus the invariant conditions are automatically satisfied. To find $h_s$ we let $z_o=\alpha$ and sum over $\alpha\in \mathcal{R}_{A_n}\,.$ Thus
\[
h_w \sum_{w\in\vartheta_s} \sum_{\alpha \in \mathcal{R}_{A_n}} (w,\alpha)^2 = h_w \sum_{\mathcal{R}_{A_n}} (\alpha,\alpha)\,.
\]
Since $(\alpha,\alpha)=2$ and $\sum_{\alpha \in \mathcal{R}_{A_n}} (w,\alpha)^2 = 2(n+1) (w,w)\,,$ together with $\# \mathcal{R}_{A_n}=n(n+1)$, $\#\vartheta_s=n+1$ and $(w,w)=n/(n+1)\,,$ it follows that $h_s=h_w\,.$ The number of
elements in $\vartheta_s$ follows from Serganova's classification and standard properties of weight vectors.

Having found $h_s$ the result of Lemma \ref{metriclemma} yields the condition
\[
(n^\vee,n^\vee) = \frac{1}{h_w} + \frac{1}{1+n}\,.
\]
Thus the construction gives a 1-parameter family of configurations, controlled by $h_w$ (or alternatively, controlled by the perpendicular scale - the length
$(n^\vee,n^\vee)$).

Example \ref{basicexample} falls into this class, with $n=1, h_w=-1\,$ This gives $(n^\vee,n^\vee)=-\frac{1}{2}$ reflecting the split signature of the metric.

\end{example}

\subsection{Imposition of the $\bigvee$-conditions}

To impose the $\bigvee$-conditions on the extended configuration $\mathcal{U}^{ext}$ it is first necessary to classify the 2-dimensional arrangements. Since $\mathcal{U}$ is a priori a $\bigvee$-system one only has to understand plane arrangements through the origin and including some combination of the vectors $\pm(w+n)\,, \pm n\,.$ It is here that the small orbit condition comes into play. Given $w_i\,,w_j \in \vartheta_s\,,$ consider the plane containing the vectors $w_i+n\,,w_j+n\,.$ Since
\[
w_i-w_j \in {\rm span}\{w_i+n\,,w_j+n \}
\]
it follows from the small orbit condition that the plane also contains an element $\alpha\in\mathcal{U}\,.$ The set of planes for which one needs to impose the $\bigvee$ conditions depend on the geometric properties of the small orbit.

As is apparent from Figure \ref{fig:A2}, unlike roots, the negative of a small orbit vector may, or may not, be a small orbit vector. If it is not, then the pure normal vectors in the extended congiguration must be absent.

\begin{lem} Let $w \in \vartheta_s\,$  and suppose $-w \not\in \vartheta_s\,.$ Then
\[
\mathcal{U}^{ext} = \mathcal{U} \cup \{ \pm (w+n)\,,w \in \vartheta_s\}\,.
\]
\end{lem}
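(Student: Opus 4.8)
The plan is to force $h_n=0$ by imposing the $\bigvee$-condition on a single, carefully chosen two-plane and then reading off the component lying in the perpendicular direction $V^\perp$. Fix a small-orbit covector $w\in\vartheta_s$ with $-w\notin\vartheta_s$ and consider the two-plane $\Pi={\rm span}\{w+n,n\}={\rm span}\{w,n\}$. I would apply Definition \ref{Vdef} at the covector $\alpha=w+n\in\mathcal{U}^{ext}$ and decompose the resulting vector identity into its $V$- and $V^\perp$-parts using the orthogonal splitting (\ref{extendedmetric}); since $n$ vanishes on $V$, $w$ vanishes on $V^\perp$, and $n(n^\vee)=(n^\vee,n^\vee)$, the arithmetic is short once the plane's contents are pinned down.

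The crux, and the step I expect to be the main obstacle, is the exact determination of $\Pi\cap\mathcal{U}^{ext}$. Because $n\notin V^*$ one has $\Pi\cap V^*={\rm span}\{w\}$, so any covector of $\Pi$ lying in $V$ must be proportional to $w$, and any extended covector $w'+n\in\Pi$ forces $w'\parallel w$. I would dispose of the unwanted proportional vectors as follows: two distinct small-orbit covectors that are proportional must be negatives of one another, since they share a common length, and this is excluded by $-w\notin\vartheta_s$; and if some $\alpha=cw\in\mathcal{U}$ were a root, then the reflection $s_\alpha$ in $\alpha$ would send $w\mapsto -w$, so $W$-invariance of the orbit would again force $-w\in\vartheta_s$, a contradiction. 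This pins down $\Pi\cap\mathcal{U}^{ext}=\{\pm n,\pm(w+n)\}$, and it is precisely here that the hypothesis $-w\notin\vartheta_s$ enters: were $-w$ a small-orbit covector, the extra pair $\pm(-w+n)$ would lie in $\Pi$ and cancel the obstruction computed below, which is exactly why symmetric orbits can tolerate the pure normals $\pm n$.

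With the plane identified, the computation finishes the argument. Pairing each $\beta$ with $-\beta$ and using $(w+n)^\vee=w^\vee+n^\vee$, the $\bigvee$-identity at $\alpha=w+n$ becomes
\[
2h_n(n^\vee,n^\vee)\,n^\vee+2h_w\big[(w,w)+(n^\vee,n^\vee)\big](w^\vee+n^\vee)=\lambda(w^\vee+n^\vee).
\]
The $V$-component, the coefficient of $w^\vee$, determines $\lambda=2h_w\big[(w,w)+(n^\vee,n^\vee)\big]$, and substituting this into the $V^\perp$-component, the coefficient of $n^\vee$, leaves $2h_n(n^\vee,n^\vee)=0$. Since the perpendicular scale $(n^\vee,n^\vee)$ is nonzero because the metric on $V^{ext}$ is non-degenerate, this gives $h_n=0$. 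Hence the covectors $\pm n$ carry zero multiplicity and drop out of the configuration, yielding $\mathcal{U}^{ext}=\mathcal{U}\cup\{\pm(w+n):w\in\vartheta_s\}$ as claimed.
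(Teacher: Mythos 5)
Your overall strategy---impose the $\bigvee$-condition on the single plane $\Pi=\mathrm{span}\{w,n\}$ and extract a contradiction---is close to the paper's own proof, and your component computation is correct \emph{given} your claim that $\Pi\cap\mathcal{U}^{ext}=\{\pm n,\pm(w+n)\}$. The genuine gap is in how you justify that claim. To rule out a root $\alpha=cw\in\mathcal{U}$ lying in $\Pi$ you invoke (i) that small-orbit covectors share a common length, and (ii) that $\vartheta_s$ is invariant under the reflection $s_\alpha$, so that $s_\alpha(w)=-w$ would force $-w\in\vartheta_s$. Neither property is available in the paper's setting: $\mathcal{U}$ is an arbitrary $\bigvee$-system, not the root system of a reflection group, and the definition of an (invariant) small orbit requires only that pairwise differences lie in $\mathcal{U}$ together with the two summation identities; the paper explicitly remarks that $\vartheta_s$ need not be a group orbit at all. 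Nor is the concern vacuous: roots proportional to small-orbit covectors do occur in this framework (e.g.\ for $B(m,n)$, where $\vartheta_s=\{\pm e_i\}$ and $\pm e_i\in\mathcal{U}$), so in the $A$-type case their absence needs a proof, not an appeal to Coxeter-type symmetry.

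The gap matters because of \emph{where} you apply the $\bigvee$-condition. Applying it at $w+n$, a root $\alpha=cw\in\Pi$ contributes $2h_\alpha c^2(w,w)\,w^\vee$ to the left-hand side: this enters the $V$-component but not the $V^\perp$-component, and your comparison of the two components then yields
\[
h_n(n^\vee,n^\vee)=h_\alpha c^2(w,w)\,,
\]
which no longer forces $h_n=0$. The paper's proof is immune to exactly this: it considers the planes through $\pm n$ and, in effect, applies the condition at the covector $n$ itself, where every $\beta\in\Pi\cap\mathcal{U}$ satisfies $\beta(n^\vee)=0$ and so drops out regardless of whether roots proportional to $w$ are present; the $V$-component then forces $h_w=0$ for any extended covector coexisting with $\pm n$, so $\pm n$ could only survive in a trivial extension $\mathcal{U}\cup\{\pm n\}$, which is excluded as reducible by the definition of $\mathcal{U}^{ext}$. (Note also that your argument presupposes a $w$ with $h_w\neq 0$, which is exactly what the reducibility exclusion guarantees, and which the paper invokes explicitly.) Your proof can be repaired either by applying the condition at $n$ instead of at $w+n$, or by keeping your computation and disposing of the case $\alpha=cw\in\Pi$ through a second application of the condition at $n$---but not by the length/reflection argument as written.
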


\begin{proof} Consider the intersection of the planes containing $\{\pm n\}$ with $\mathcal{U}\,.$ Such configurations take the form

\begin{figure}[h!]
  \includegraphics[width=12cm]{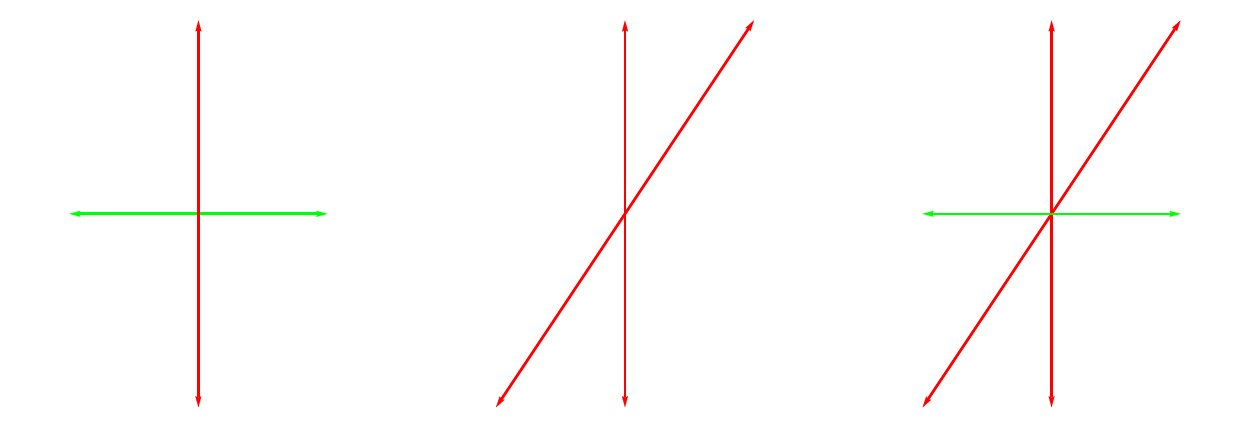}
  \caption{$2$-plane configuration including the normal direction}
  \label{A2}
\end{figure}

\noindent where the vectors $\pm (w+n)$ and/or $\pm n$ may, or may not, be present in the configuration:

\begin{itemize}

\item[$\bullet$] If $\pm(w+n)$ are not present the remaining vectors are perpendicular and the $\bigvee$-conditions are vacuous.

\item[$\bullet$] If $\pm(w+n)$ are present (so $h_w \neq 0$) the vectors $\pm \alpha$ may or may not be present. In either case the $\bigvee$-conditions imply that $h_w=0$ and so such configurations cannot occur (it is here that the assumption $-w\not\in\vartheta_s$ is used. Without this, additional terms could appear).

\end{itemize}

\noindent Thus one arrives at a reducible configuration so by definition of $\mathcal{U}^{ext}$ (which excludes such reducible configurations),
\[
\mathcal{U}^{ext} = \mathcal{U} \cup \{ \pm (w+n)\,,w \in \vartheta_s\}\,.
\]
\end{proof}

\noindent Since the small orbit vectors of $B_n$ have the property that $\pm w \in\vartheta_s$ and the small orbits vectors of $A_{n\geq 2}$ do not (as proved in \cite{S}) we define the following:

\begin{defn} For all $w\in\vartheta_s:$
\begin{itemize}
\item[(a)] if $-w \not\in\vartheta_s$ then $\mathcal{U}^{ext}$ is said to be of $A$-type;
\item[(b)] if $-w \in\vartheta_s$ then $\mathcal{U}^{ext}$ is said to be of $B$-type.
\end{itemize}
\end{defn}

\bigskip

\begin{example}
{~~~~~~~~~~~~~~~~~~~~~~~~~~~~~~~~~~~~~~~~~~~~~~~~~~~~~~~~~~~~~~~~~~~~~~~~~~~~~~~~~~}
\begin{itemize}
\item[$\bullet$] $\left(\mathcal{R}_{A_n}\right)^{ext}$ is of $A$-type;
\item[$\bullet$] $\left(\mathcal{R}_{B_n}\right)^{ext}$ is of $B$-type;
\end{itemize}
In fact, with specific choices of normalizations;
\begin{eqnarray*}
\left(\mathcal{R}_{A_n}\right)^{ext} &\cong & \mathcal{R}_{A_{n+1}}\,,\\
\left(\mathcal{R}_{B_n}\right)^{ext} &\cong & \mathcal{R}_{B_{n+1}}\,,
\end{eqnarray*}
\end{example}

\noindent One now has to impose the $\bigvee$ conditions on the planes ${\rm span}\{w_i+n\,,w_j+n\} $ and in particular on vectors in the intersection
${\rm span}\{w_i+n\,,w_j+n\} \cap \mathcal{U}\,.$ In general one can not say much about this intersection. It is at this stage that the small-orbit property comes into play\,: it enables one to know what vectors are in this set. One obtains the 2-plane configurations (see Figure \ref{ref:2planes}) and in the following Theorem the $\bigvee$-conditions are applied to these 2-plane configurations.

\begin{figure}[h!]
  \includegraphics[width=12cm]{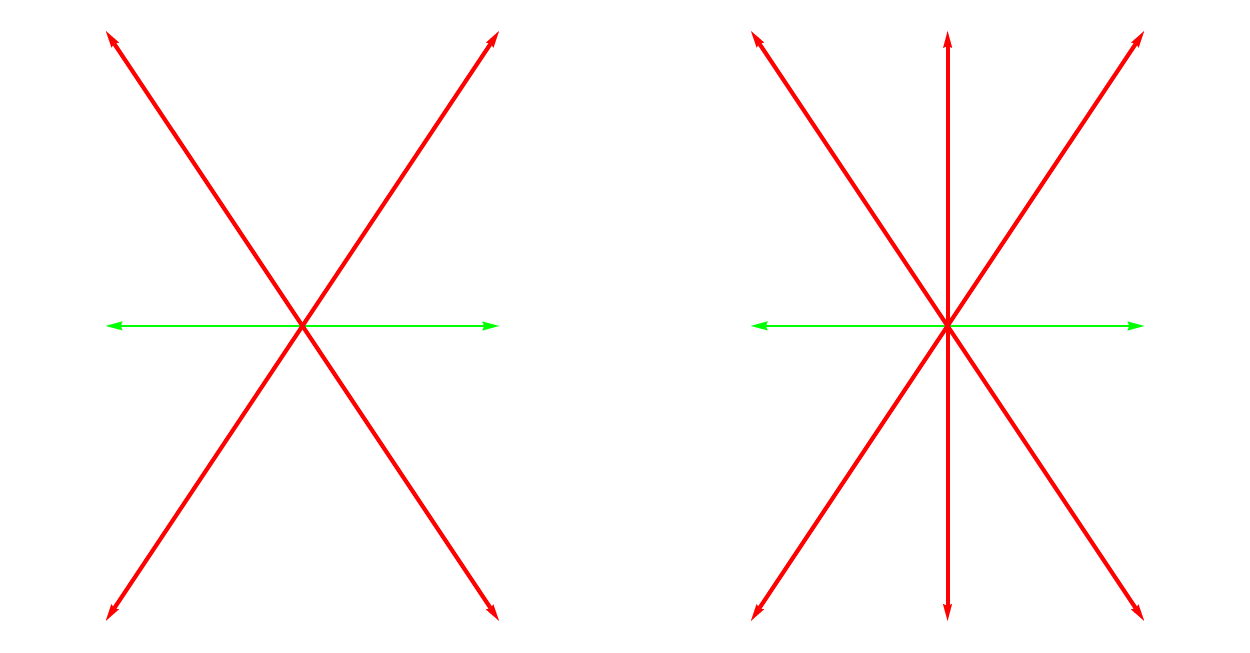}
  \caption{2-plane configurations}
  \label{ref:2planes}
\end{figure}

\begin{thm}\label{basictheorem}
The imposition of the $\bigvee$-conditions on the extended configurations $\mathcal{U}^{ext}$ results in the constraints on the data
$\{h_{w_i}\,,h_n\,,{n^\vee}\}\,:$

\begin{eqnarray*}
h_{w_i} \left[ (w_i^\vee,w_i^\vee) - (w_i^\vee,w_j^\vee) \right] & = &  h_{w_j} \left[ (w_j^\vee,w_j^\vee) - (w_i^\vee,w_j^\vee) \right]\,,\\
h_{w_i} \left[ (w_i^\vee,w_j^\vee) + (n^\vee,n^\vee) \right] & = & h_\alpha  \left[ (w_j^\vee,w_j^\vee) - (w_i^\vee,w_j^\vee) \right]\,,\\
h_{w_j} \left[ (w_i^\vee,w_j^\vee) + (n^\vee,n^\vee) \right] & = & h_\alpha  \left[ (w_i^\vee,w_i^\vee) - (w_i^\vee,w_j^\vee) \right]\,,
\end{eqnarray*}
where $\alpha = w_j-w_i\,,$ and (if the system is of $B$-type):

\[
h_n (n^\vee,n^\vee) = h_\alpha (w^\vee,w^\vee) + 2 h_w \left\{ (w^\vee,w^\vee) - (n^\vee,n^\vee)\right\}\,.
\]

\end{thm}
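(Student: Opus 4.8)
The plan is to impose Definition \ref{Vdef} on $\mathcal{U}^{ext}$ plane by plane. Since $\mathcal{U}$ is already a $\bigvee$-system, only two-planes $\Pi$ meeting the new covectors $\pm(w+n), \pm n$ need be examined, and the first—and decisive—step is to classify these using the small-orbit property. Given $w_i, w_j \in \vartheta_s$, the identity $(w_j+n)-(w_i+n) = w_j - w_i \in \mathcal{U}$ shows that $\Pi_{ij} := {\rm span}\{w_i+n, w_j+n\}$ contains the root $\alpha := w_j - w_i$; conversely, because $\Pi_{ij}\cap V$ is one-dimensional, the only elements of $\mathcal{U}$ in $\Pi_{ij}$ are multiples of $\alpha$. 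More generally, a plane ${\rm span}\{w_i+n, \beta\}$ through a root $\beta$ either has $\beta$ parallel to some difference $w_j - w_i$, whence it is one of the $\Pi_{ij}$, or else presents $w_i+n$ together with a perpendicular root, giving a vacuous $\bigvee$-condition. The small-orbit property is exactly what guarantees that every root relevant to a new covector arises as such a difference, so that the generic planes to test are the $\Pi_{ij}$, and that $\mathcal{U}^{ext}$ is genuinely of $A$- or $B$-type as defined. For $B$-type there are in addition the degenerate planes ${\rm span}\{w+n, -w+n\}$, which also contain $\pm n$ and the root along $w$, and these will produce the fourth constraint.

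Next I would record the dual data. Writing $(w+n)^\vee = w^\vee + n^\vee$ with $w^\vee \in V$ and $n^\vee \in V^\perp$, the orthogonal decomposition (\ref{extendedmetric}) gives $(w^\vee, n^\vee)^{ext} = 0$, whence $((w_i+n)^\vee, (w_j+n)^\vee)^{ext} = (w_i^\vee, w_j^\vee) + (n^\vee, n^\vee)$ and $\alpha^\vee = (w_j+n)^\vee - (w_i+n)^\vee = w_j^\vee - w_i^\vee$. Since the metric isomorphism is linear, every $\beta^\vee$ with $\beta \in \Pi_{ij}$ lies in the fixed two-plane image of $\Pi_{ij}$, so each instance of the $\bigvee$-condition is a vector identity there, for which $\{(w_i+n)^\vee, (w_j+n)^\vee\}$ is a basis.

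The core computation then applies Definition \ref{Vdef} to each covector of $\Pi_{ij}$ in turn, noting that $\pm\beta$ contribute equally, so each pair is summed once and doubled (the factor cancelling against $\lambda$). Taking $\alpha = w_i+n$, expanding $\beta(\alpha^\vee) = (\beta^\vee, \alpha^\vee)^{ext}$ for $\beta \in \{w_i+n, w_j+n, \alpha\}$, substituting $\alpha^\vee = (w_j+n)^\vee - (w_i+n)^\vee$, and setting the coefficient of $(w_j+n)^\vee$ to zero (that of $(w_i+n)^\vee$ merely fixing $\lambda$) yields the third displayed equation. The choice $\alpha = w_j+n$ gives, symmetrically, the second, while $\alpha = w_j - w_i$ gives two coefficient equations whose sum eliminates $\lambda$ and produces the first (which is, consistently, the compatibility of the other two). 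Finally, for a $B$-type plane I would apply the condition to $\alpha = w+n$, now picking up the extra contributions of $\pm n$ (with $\beta(\alpha^\vee) = (n^\vee, n^\vee)$) and of the root along $w$; equating the coefficients of $w^\vee$ and of $n^\vee$ and cancelling gives the fourth constraint.

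I expect the main obstacle to lie in the first and last steps rather than in the central linear algebra. First, establishing completeness of the plane classification—that the small-orbit property leaves no stray planes producing additional, unlisted constraints—is the real content and is what forces the $A$/$B$ dichotomy. Second, the $B$-type degenerate plane must be handled carefully: it contains strictly more covectors than the generic $\Pi_{ij}$, and one must correctly normalize the root lying along $w$ (so that $\alpha^\vee$ is proportional to $w^\vee$) in order to reproduce the stated form $h_n (n^\vee,n^\vee) = h_\alpha (w^\vee,w^\vee) + 2 h_w\{(w^\vee,w^\vee) - (n^\vee,n^\vee)\}$.
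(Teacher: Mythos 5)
Your proposal is correct and is essentially the paper's own argument: the paper gives no details at all, stating only that the theorem follows by direct calculation, applying Definition \ref{Vdef} to all 2-plane configurations containing extended covectors (with the small-orbit property giving control over $\Pi \cap \mathcal{U}^{ext}$), and deferring the computations to \cite{thesis}. Your plane classification, the extraction of the three generic constraints by taking the distinguished covector to be $w_i+n$, $w_j+n$ and $w_j-w_i$ in turn, and the handling of the degenerate $B$-type plane ${\rm span}\{w+n,-w+n\}$ all check out; the one point you assert rather than prove --- that a root non-orthogonal to $w_i$ must be parallel to some difference $w_j-w_i$, so that no stray planes impose extra constraints --- follows for Weyl-group small orbits from closure of $\vartheta_s$ under the reflection $r_\beta$ (so $w_j = r_\beta(w_i)$ works), and holds by direct inspection for the generalized root systems $A(m,n)$ and $B(m,n)$ treated in the paper.
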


\noindent The proof is by direct calculation and details will not be given. It follows from applying Definition \ref{Vdef} to all 2-plane configurations containing extended vectors: details may be found in \cite{thesis}. It is here that the small orbit geometry comes into play - it gives control over the vectors in $\Pi \cap \mathcal{U}^{ext}$.  Note that this over-constrains the data $\{h_w\,,{n^\vee}\}$ but the constraints are completely determined by the geometry of small orbits.

\begin{example} Let $\mathcal{U}=\mathcal{R}_{A_n}\,.$  Then from standard properties of weight vectors,
\[
(w_i^\vee,w_j^\vee) = \delta_{ij} - \frac{1}{n+1}\,.
\]
With these one obtains $h_{w_i}=h_{w_j}\,,$ i.e. all the $h_{w_i}$ are constants, and, with the normalization $h_\alpha=1\,,\alpha\in\mathcal{U}\,,$ the condition
\[
(n^\vee,n^\vee) = \frac{1}{h_w} + \frac{1}{1+n}\,.
\]
Note, this is the same condition that comes from Example \ref{AnEx}. Thus imposition of the $\bigvee$-conditions yields no new constraints.
\end{example}

\begin{example}

Let $\mathcal{U} = \mathcal{R}_{G_2}\,.$ We normalise these roots - generated by simple roots $\alpha$ and $\beta$ by the conditions $(\alpha,\alpha)=2\,,(\alpha,\beta)=-3\,,(\beta,\beta)=6\,.$ From (\ref{canonicalmetric}) one finds
\[
6 h_s + 18 h_l = h_\mathcal{U}\,,
\]
where $h_s$ and $h_l$ are the multiplicities of the short and long roots. From \cite{S} the small orbit is the $A_2$-subsystem generated by the set
$\{\pm\alpha\,,\pm(\alpha+\beta)\,,\pm(2\alpha+\beta)\}$ and using this one finds
\[
h_{\vartheta_s} = 6 h_w\,.
\]
Lemma \ref{metriclemma} implies
\[
3 h_s + 9 h_l + 6 h_w = \left( h_n + 6 h_w \right) \, (n^\vee,n^\vee)\,.
\]
Theorem \ref{basictheorem} then implies the equation
\begin{eqnarray*}
 - 3 h_l + h_n \left\{-1 + (n^\vee,n^\vee)\right\} & = & 0 \,,\\
 -2 h_s - 4 h_w + \left\{ h_n + 2 h_w \right\} (n^\vee,n^\vee) & = & 0\,.
 \end{eqnarray*}
 Assuming $h_w\neq 0$ (otherwise the construction collapses) one can solve equations to obtain the extended configuration data in terms of the original $\bigvee$-data $\{h_s\,,h_l\}\,:$
 \begin{eqnarray*}
 h_w & = & \frac{1}{2} (h_s - 3 h_l)\,,\\
 h_n & = & \frac{3(h_s - 3 h_l)^2}{h_s + 3 h_l}\,,\\
 (n^\vee,n^\vee) & = & \frac{h_s + 3 h_l}{h_s - 3 h_l}\,.
 \end{eqnarray*}
 Note that one requires a slight constraint on the original data: $h_s \neq \pm 3 h_l\,.$ This extended $\bigvee$-system coincides, after some linear algebra and redefinitions, to the system $G_3(t)$ presented in \cite{FV}.

 \end{example}

\section{Generalized Root systems and their extensions}

The examples above have all been based on the system $\mathcal{U}$ being the root system of a (finite) Coxeter group. A wider class a examples come from
generalized root systems. Following \cite{S} and \cite{FV} :

\begin{defn}
Let $V$be a finite-dimensional complex vector space with a non-degenerate bilinear form $(\,,\,)\,.$ The finite set $\mathcal{U} \subset V \setminus \{0\}$ is called a generalized root system if the following conditions are fulfilled:
\begin{enumerate}
\item $\mathcal{U}$ spans $V$ and $\mathcal{U}=-\mathcal{U}$\,; \\
\item if $\alpha\,,\beta \in \mathcal{U}$ and $(\alpha,\alpha)\neq 0$ then $2 \frac{(\alpha,\beta)}{(\alpha,\alpha)} \in \mathbb{Z}$ and
$r_\alpha(\beta) = \beta - 2 \frac{(\alpha,\beta)}{(\alpha,\alpha) }\alpha \in \mathcal{U}\,;$\\
\item if $\alpha \in \mathcal{U}$ and $(\alpha,\alpha)=0$ then for any $\beta\in\mathcal{U}$ such that $(\alpha,\beta)\neq 0$ at least one of the vectors
$\beta+\alpha$ or $\beta-\alpha$ belongs to $\mathcal{U}$\,.\\
\end{enumerate}

\end{defn}
The classification of irreducible generalised root systems was given in \cite{S}. In the following subsections we analyze certain classes of such generalized root
systems, constructing small-orbits and hence extended configurations on which the $\bigvee$-conditions are imposed. Such configurations have much in common with the admissible deformations of generalized root systems \cite{SV}. We also give, for the two classical series $A(m,n)$ and $B(m,n)$, an interpretation of a \lq reflection\rq~in an isotropic root and a model for such an action. Roughly one has two Coxeter groups which interchange  - depending on the sign of $(\alpha,\alpha)$ - the zeros and poles of a rational function, and interchange a zero and a pole in the case of an isotropic root. Such an interpretation originates from the superpotential construction for almost-dual Frobenius manifolds.

\subsection{The series $A(m,n)$}

Consider the following data:
\begin{eqnarray*}
\mathcal{U} & = & \left\{ \alpha_{ij} := e_i-e_j\,, i\neq j\,,\quad i,j=0\,,\ldots\,,n+m\right\}\,,\\
\varepsilon_i & = & \begin{cases}  +1 &\mbox{if } i=0\,,\ldots\,,m\,,\\ -1 & \mbox{if } i=m+1\,,\ldots\,, n+m \end{cases}\,,\\
g& = & \left.\sum_{i=0}^{m+n} \varepsilon_i (dz^i)^2\right|_{\sum \varepsilon_j z^j=0}\,.
\end{eqnarray*}
This constitutes a generalized root system (recall, elements of $\mathcal{U}$ are covectors and the metric $g$ defines the vectors) and, as may be
easily verified, a $\bigvee$-system with multiplicities $h_{\alpha_{ij}} = \varepsilon_i \varepsilon_j\,.$ Note
\[
(\alpha_{ij}^\vee, \alpha_{ij}^\vee ) = \varepsilon_i+\varepsilon_j\,,
\]
so the squared lengths can be $+2\,,0$ or $-2\,,$ reflecting the split signature of the metric $g\,.$ If $n=0$ the system reduces to the standard Coxeter root system for $A_m\,.$

The small orbit is defined as follows \cite{S}:
\[
\vartheta = \left\{ w_i:=e_i + \frac{1}{n-{m+1}} \sum_{r=0}^{m+n} \varepsilon_r e_r \right\}
\]
(so trivially, $\alpha_{ij} = w_i-w_j$). That this is the only small orbit (up to sign) was proved in \cite{S}.

Applying Theorem \ref{basictheorem} gives the multiplicities $h_{w_i} = \varepsilon_i h_w$ for some $h_w\,,$ and with this the invariant conditions imply that $h_\vartheta=h_w\,.$ Finally one obtains the
condition
\[
\frac{1}{h_w} + \frac{1}{m+1-n} = ( n^\vee,n^\vee )\,.
\]
This then gives an extended $\bigvee$-system $\mathcal{U}^{ext}\,,$ with one free parameter: $h_w$ fixes the perpendicular scale, or visa-versa.

\subsection{The series $B(m,n)$}
Consider the following data:
\begin{eqnarray*}
\mathcal{U} & = & \left\{ \pm e_i \pm e_j\,, i\neq j\,, i,j=1\,,\ldots\,,n+m\right\}\cup
\left\{ e_i\,,i=1\,,\ldots\,,n+m\right\}\,,\\
\varepsilon_i & = & \begin{cases}  +1 &\mbox{if } i=1\,,\ldots\,,m\,,\\ -1 & \mbox{if } i=m+1\,,\ldots\,, n+m \end{cases}\,,\\
g& = & \sum_{i=1}^{m+n} \varepsilon_i (dz^i)^2\,.
\end{eqnarray*}
This constitutes a generalized root system (recall, elements of $\mathcal{U}$ are covectors and the metric $g$ defines the vectors) and, as may be
easily verified, a $\bigvee$-system with multiplicities $h_{\pm e_i\pm e_j} = h \varepsilon_i \varepsilon_j$ and $h_{\pm e_i} = 2 \varepsilon_i(2 \varepsilon_i+\gamma)$ with $\gamma$ arbitrary. Note

\[
( \alpha_{ij}^\vee, \alpha_{ij}^\vee ) = \varepsilon_i+\varepsilon_j\,,\quad ( \alpha_i^\vee,\alpha_i^\vee ) = \varepsilon_i\,,
\]
(where $\alpha_{ij}=\pm e_i\pm e_j$ and $\alpha_i=\pm e_i$) so the squared lengths can be $+2\,,+1\,,0\,,-1$ or $-2\,,$ reflecting the split signature of the metric $g\,.$ If $n=0$ the system reduces to the standard Coxeter root system for $B_m\,.$

The small orbit is defined as follows \cite{S}
\[
\vartheta = \left\{\pm e_i \right\}
\]
That this is the only small orbit (up to sign) was proved in \cite{S}.

Applying Lemma \ref{metriclemma} and Theorem \ref{basictheorem} gives the multiplicities $h_{w_i} = \varepsilon_i h_w$ and the conditions
\begin{eqnarray*}
h_w & = & \frac{h}{(n^\vee,n^\vee)}\,,\\
h_n & = & \frac{2h}{(n^\vee,n^\vee)} + \frac{h \gamma}{(n^\vee,n^\vee)}\,.
\end{eqnarray*}
This then gives an extended $\bigvee$-system $\mathcal{U}^{ext}\,,$ with one free parameter: the perpendicular scale.

\subsection{Symmetries of the extended configurations}

Consider again the extended configuration shown in Figure \ref{fig:fig3}. The original configuration $\mathcal{U} = R_{A_2}$ is by definition invariant under reflections - the group $A_2$ - as are the small orbit vectors (which form an irregular orbit). The reflection generated by these roots can be extended to the whole space (acting trivially in the perpendicular direction). Thus the whole 3-dimensional configuration is invariant under $A_2$ but not, in general, under reflections generated by the new roots $\pm(w+n)\,.$ Clearly this idea generalises - any symmetry is inherited by the extended configuration.

However, we now have to consider the generalized root system of $A(m,n)$ which has roots of positive, negative and zero length.
The non-isotropic roots define reflections in the hyperplanes $\alpha_{ij}({z})=0$ and the whole configuration $\mathcal{U}^{ext}$ is invariant under reflections in the original roots.

More explicitly, if $\alpha_{ij}$ is a non-isotropic root then it is easy to show, using the small-orbit property, that
\[
r_{\alpha_{ij}} (w_k) = \begin{cases} w_k & \mbox{if } {\sl i,j,k}, \mbox{ distinct},\\
w_i &\mbox{if } j=k\,,\\
w_j &\mbox{if } i=k\,,
\end{cases}
\]
(on writing $\alpha_{rs}=w_r-w_s$) and the reflection $r_{\alpha_{ij}}(\alpha_{rs})$ can then be found using linearity. However, as an abstract group, one may define a transformation
$r_{\alpha_{ij}}$ for all roots (including isotropic roots) by the above formulae. While $\left(r_{\alpha_{ij}}\right)^2=id\,,$ the lengths of the covectors are not preserved under the action of $r_{\alpha_{ij}}\,.$
For example, since
\[
(w_i^\vee,w_j^\vee) = \varepsilon_i \delta_{ij} + \frac{1}{n-(m+1)}
\]
a \lq reflection\rq~in an isotropic root will change the length of the roots. With this interpretation/definition of $r_{\alpha_{ij}}$ the whole extended
configuration $\mathcal{U}^{ext}$ is invariant under the above
action of $r_{\alpha_{ij}}\,.$

If one extends the action of $r_{\alpha_{ij}}$ to the multiplicities $\varepsilon_i$ via the formula
\[
r_{\alpha_{ij}} (\varepsilon_k) = \begin{cases} \varepsilon_k & \mbox{if } {\sl i,j,k}, \mbox{ distinct},\\
\varepsilon_i &\mbox{if } j=k\,,\\
\varepsilon_j &\mbox{if } i=k\,,
\end{cases}
\]
then the plane $\sum \varepsilon_i z^i=0$ is invariant under the action of $r_{\alpha_{ij}}$ (which now interchanges {\sl both} $z^i$ with $z^j$ and $\varepsilon_i$ with $\varepsilon_j$).

A model for this comes from the space of rational functions
\[
\lambda\left( \{z^k\},\{\varepsilon_k\}\,: z \right) = \left.\prod_{k=0}^{m+n} (z-z^k)^{\varepsilon_k}\right|_{\sum \varepsilon_j z^j=0}\,.
\]
So for all roots $\alpha_{ij}\,,$
\[
\lambda\left( \{r_{\alpha_{ij}} z^i\},\{r_{\alpha_{ij}}\varepsilon_i\}\,: z \right)=
\lambda\left( \{z^i\},\{\varepsilon_i\}\,: z \right)\,.
\]
If $n=0$ then one recovers the standard invariant polynomial model for $A_m$
\[
\lambda(z) = \left.\prod_{i=0}^m (z-z^i) \right|_{\sum z^r=0}
\]
which is invariant under the interchange of zeros. Repeating the argument for $B(m,n)$ results in the rational function
\[
\lambda\left( \{z^k\},\{\varepsilon_k\}\,: z \right) = \prod_{k=1}^{m+n} \left(z^2-(z^k)^2\right)^{\varepsilon_k}\,.
\]
If $n=0$ then one recovers the standard invariant polynomial model for $B_m\,.$

\section{Legendre transformations and Trigonometric $\bigvee$-systems}\label{Legendre}

A symmetry of the WDVV equations are transformations
\begin{eqnarray*}
z^\alpha & \mapsto & {\hat{z}^\alpha} \,,\\
g_{\alpha\beta} & \mapsto & {\hat{g}}_{\alpha\beta}\,,\\
F & \mapsto & {\hat{F}}
\end{eqnarray*}
that preserves the equations. Here the $\{z^\alpha\}$ are the flat coordinates of the metric $g\,.$ In \cite{D1} Legendre-type transformations were constructed. These take the form of the transformations
\begin{eqnarray*}
\hat{z}_\alpha & = & \partial_{z^\alpha} \partial_{z^\kappa} F\,,\\
\frac{\partial^2 \hat{F}}{\partial \hat{z}^\alpha \hat{z}^\beta } & = & \frac{\partial^2 F}{\partial z^\alpha \partial z^\beta}\,,\\
g_{\alpha\beta} & =& {\hat{g}}_{\alpha\beta}\,.
\end{eqnarray*}
Such transformations are generated by a constant vector field $\partial = \frac{\partial~}{\partial z^\kappa}\,,$ where $\kappa=1,\ldots,n\,.$

In this section we apply such a Legendre transformation to the solution of the WDVV equations given by extended $\bigvee$-systems, as constructed above. Such systems have a distinguished vector that may be used to define the Legendre transformation, namely the vector in the perpendicular, or extended, direction
$\partial = \frac{\partial~}{\partial z^\perp}\,.$

Since $\mathcal{U}^{ext} = \mathcal{U} \cup \mathcal{U}^\prime\,,$ where $\mathcal{U}$ is the original, unextended $\bigvee$-system, the new variables are given purely in terms of the vector in $\mathcal{U}^\prime$ and hence only on the small-orbit data. In particular:
\begin{equation}
\hat{z}_\alpha = \frac{\partial^2~}{\partial z^\alpha \partial z^\perp}
\left\{
\sum_{\beta \in \mathcal{U}^\prime} h_\beta \beta(z)^2 \log \beta(z)
\right\}
\label{changeofvariable}
\end{equation}
since terms involving $\alpha(z)$ with $\alpha\in\mathcal{U}$ are $z^\perp$ independent. The difficulty in applying such a transformation is computational: one has to invert the above change of variables. In the following example we invert these equations for the case when $\mathcal{U} = \mathcal{R}_{A_n}\,.$ The procedure is very general and may easily be applied to other systems.

\begin{example}
We study the case when $\mathcal{U} = \mathcal{R}_{A_n}\,,$ when the the original system is the set of roots of the $A_n$ Coxeter group.
To do this we utilise the fact that for $A_n$ we have $\# \vartheta_s=n+1$ so we can use $n$ of them as a a basis for $V\,.$ We label these
covectors $w_i\,,i=0\,,\ldots\,,n$ with $w_0=-\sum_{i=1}^n w_i\,.$

Using this basis we define (recall $z=z_o + z^\perp$)
\begin{eqnarray*}
z_i & = & w_i(z)\,, \quad i=0\,,\ldots\,, n\\
z_\perp & = & n(z)\,.
\end{eqnarray*}
Note that $\sum_{i=0}^n z_i = 0\,.$ With this, the change of variables given by (\ref{changeofvariable}) reduces to
\begin{eqnarray*}
{\hat z}_i & = & \frac{\partial^2~}{\partial z_i \partial z_\perp} \left\{ \sum_{r=0}^n 2 h_\beta (z_i+z_\perp)^2 \log(z_i+z_\perp)\right\}\,,\\
{\hat z}_\perp & = & \frac{\partial^2~}{\partial (z_\perp)^2} \left\{ \sum_{r=0}^n 2 h_\beta (z_i+z_\perp)^2 \log(z_i+z_\perp)\right\}\,.
\end{eqnarray*}
On absorbing constants, or, using the quadratic freedom in the definition of $F$ one obtains the simple system
\begin{eqnarray*}
{\hat z}_i & = & 4 h_\beta \left\{  \log(z_i+z_\perp) - \log(z_0+z_\perp) \right\}\,,\\
{\hat z}_\perp & = & 4 h_\beta \sum_{r=0}^n \log(z_i+z_\perp)
\end{eqnarray*}
which is straightforward to invert. This yields
\begin{eqnarray*}
z_i+z_\perp & = & \displaystyle{e^{\frac{1}{4 h_\beta(n+1)} ({\hat{z}}_0+{\hat{z}}_\perp)} \,. \,e^{\frac{1}{4h_\beta} {\hat z}_i} }\,,\qquad i=1\,,\ldots\,,n\,,\\
z_0+z_\perp & = & \displaystyle{e^{\frac{1}{4 h_\beta(n+1)} ({\hat{z}}_0+{\hat{z}}_\perp)}\,.}
\end{eqnarray*}
One could go further and solve these, but this is not actually required. All that is required are the terms $\alpha(z)$ and $\beta(z)\,.$ In fact, the above formulae are precisely the $\beta(z)$-terms, and to find the $\alpha(z)$-terms one can again use the small-orbit property and write each
$\alpha$ as the difference of two small-orbit covectors. Thus if $\alpha=w_i-w_j$ then
\begin{eqnarray*}
\alpha(z) & = & (w_i-w_j)(z)\,,\\
& = &
\begin{cases}
z_i-z_j & \mbox{if } i\,,j \neq 0 \,,\\
z_0-z_j &\mbox{if } i =0\,,j\neq 0\,.
\end{cases}\\
&=&
\begin{cases}
e^{\frac{1}{4 h_\beta(n+1)} ({\hat{z}}_0+{\hat{z}}_\perp)}  \left(e^{\frac{1}{4h_\beta} {\hat z}_i }-e^{\frac{1}{4h_\beta} {\hat z}_j} \right)\,,& \mbox{if } i\,,j \neq 0 \,,\\
&\\
e^{\frac{1}{4 h_\beta(n+1)} ({\hat{z}}_0+{\hat{z}}_\perp)} \left(1-\,e^{\frac{1}{4h_\beta} {\hat z}_i }\right)\,,& \mbox{if } i =0\,,j\neq 0\,.
\end{cases}\\
\end{eqnarray*}

To complete the Legendre transformation one has to integrate the equations to find $\hat{F}$. Using these formulae
one finds, schematically, that
\[
\frac{\partial^2 {\hat F}}{\partial {\hat z}_i \partial {\hat z}_j} = {\rm linear~term~} + \sum_\sigma {\rm terms~involving~}\left\{\log\left[ 1 - e^{\sigma({\hat z})} \right]\right\}
\]
(where the $\sigma$ are certain covectors - linear combinations of the original covectors) with similar formulae for the ${\hat z}_\perp$-derivatives. It is important to note that ${\hat z}_\perp$ only occurs in the linear terms in these expressions.
Integrating yields a solution of the WDVV equations of trigonometric type,
\[
{\hat F} = {\rm cubic~term~} + \sum_{\sigma\in{\widehat{\mathcal{U}}}} c_\sigma Li_3\left( e^{\sigma(z)} \right)
\]
where, again, ${\hat z}_\perp$ only occurs in the cubic term.

\end{example}

\noindent Explicit examples of this kind may be found in \cite{F,RS}.

\begin{example}
Applying this Legendre transformation to Example \ref{basicexample} yields the prepotential \cite{RS}
\[
{\hat F} = \frac{1}{24} z_\perp^3 - \frac{1}{8} z_\perp z^2 + \frac{1}{2} \left\{  Li_3 \left(e^z\right)  +  Li_3 \left(e^{-z}\right)  \right\}\,.
\]
This is the almost-dual prepotential associated to the prepotential
\[
F=\frac{1}{2} t_1^2 t_2 + e^{t_2}\,,
\]
which is constructed from the extended affine Weyl group $A_1^{(1)}\,.$
\end{example}

\noindent The above example shows a connection between extended $\bigvee$-systems and extended affine Weyl groups. This link - via a Legendre transformation - will form the subject of the next section.

\section{Extended $\bigvee$-systems and almost duality for extended affine Weyl orbit spaces}\label{hurwitz}

In the early 1990's two classes of solutions to the WDVV equations were constructed from a finite Coxeter group $W\,.$ On the orbit space
$\mathbb{C}^N/W$ a polynomial solution was found \cite{D1} using the Saito construction \cite{Saito}. But, motivated by the appearance of the WDVV equation in Seiberg-Witten theory, solutions of the form
\[
F^\star = \frac{1}{4} \sum_{\alpha \in R_W} \alpha(z)^2 \log \alpha(z)
\]
had also been found. Later in the decade, the link between these two classes of solutions was found, via the notion of almost-duality \cite{D2}.

Given a Frobenius manifold $F=\{\mathcal{M}\,,\circ\,,e\,,E\,,\eta\}$ (we denote the manifold and the corresponding prepotential by the symbol
$F$) one may obtain a so-called almost-dual Frobenius manifold $F^\star = \{ \mathcal{M} \backslash \mathcal{D}\,, \star\,,E\,, g\}$ by defining a new metric and multiplication by the formulae
\begin{eqnarray*}
g(X,Y) & = & \eta(E^{-1} \circ X,Y)\,, \\
X \star Y & = & E^{-1} \circ X \circ Y\,.
\end{eqnarray*}
This new metric turns out to the flat (by virtue of the Frobenius manifold axioms) and compatible with the new multiplication. Using these facts Dubrovin showed that the $\star$-multiplication provides us with a solution of the WDVV-equations written in the flat coordinates of the
new metric $g\,.$

Thus given a Frobenius manifold $F$ we have an almost-dual manifold $F^\star\,.$ But one may apply a Legendre transformation to $F$ to get a new manifold $\hat{F}$ and apply almost duality to that. This is summarised in the diagram:
\[
\begin{array}{ccc}
F&\overset{S_\kappa}{\longrightarrow}&
{\hat F}\\
%& & \\
\downarrow & & \downarrow \\
%& & \\
F^\star
&&
{\hat F}^\star
\end{array}
\]
In \cite{RS} a twisted Legendre transformation was constructed:
\[
F^\star \overset{{\hat S}_\kappa}{\longrightarrow} {\hat F}^\star\,
\]
The vector field that generates this twisted Legendre transformation is
\[
\tilde\partial = E \circ \partial\,,
\]
i.e the generating field $\partial$ twisted by multiplication by the Euler vector field.

One of the main problems in the theory of almost-dual {\sl type} solutions to the WDVV equations (this class including $\bigvee$-systems) is to indentfy those which are precisely the almost-dual prepotentials to some Frobenius manifold. In \cite{D2} a reconstruction theorem was proved, but the theorem is extremely difficult to use in practice. It relies on solving a linear Lax pair, and finding a vector field (to play the role of $e$) which acts on this solution in a specific way. In this section we bypass this reconstruction theorem and prove the following.

\begin{thm}
Let $W$ be a finite irreducible classical Coxeter group of rank $l$ and let $\widetilde{W}$ be the extended affine Weyl group of $W$ with arbitrary marked node. Then up to a Legendre transformation, the almost dual prepotentials of the classical extended affine Weyl group orbit spaces
$\mathbb{C}^{l+1}/\widetilde{W}$ are, for specific values of the free data, the extended $\bigvee$-systems of the $\bigvee$-system $R_W\,.$
\medskip

\noindent In particular:
\begin{itemize}

\item{} The almost dual prepotential corresponding to the orbit space
\[
\mathbb{C}^{l+1}/ {\widetilde{W}^{(k)}(A_l)}
\]
is the Legendre transformation, along the extended direction, of the extended $\bigvee$-system with:
\begin{itemize}
\item[(i)] original $\bigvee$-system
\begin{eqnarray*}
\mathcal{U} & =  & R_{A_l}\,,\\
h_\alpha & = & 1 \qquad \forall \alpha \in \mathcal{U}\,;
\end{eqnarray*}
\item[(ii)] data for extension (of $A$-type):
\begin{eqnarray*}
\vartheta_s & = & \{ w(\omega_1) \, | w \in W(A_l) \}\,,\\
h_w & = & -(l+1-k) \qquad \forall w\in\mathcal{U}^{ext}\,.
\end{eqnarray*}
\item[(iii)] superpotential data: The superpotential for the extended $\bigvee$-system is given by (\ref{genA}) with
\[
{\bf k}^{ext}= \{ -(l+1-k) \,,\underbrace{1\,,\ldots\,,1}_{l+1}\}\,.
\]
\end{itemize}
\item{} The almost dual prepotential corresponding to the orbit space
\[
\mathbb{C}^{l+1}/ {\widetilde{W}^{(k)}(C_l)}
\]
with flat structure defined by the constant $m$, where $0\leq m\leq l-k$,
is the Legendre transformation, along the extended direction, of the extended $\bigvee$-system with (where $s=-2(l-(k+m))\,,:$
\begin{itemize}
\item[(i)] original $\bigvee$-system
\begin{eqnarray*}
\mathcal{U} & =  & R_{B_l}\,,\\
h_{\alpha_{short}} & = & 1 + s \,\\
h_{\alpha_{long}} & = & 1
\end{eqnarray*}
for all $\alpha_{short/long} \in \mathcal{U}\,;$
\item[(ii)] data for extension (of $B$-type):
\begin{eqnarray*}
\vartheta_s & = & \{ w(\omega_1) \, | w \in W(B_l) \}\,,\\
h_w & = & -2k \qquad \forall w\in\mathcal{U}^{ext}\,,\\
h_n & = & - 2k (s+2k)\,.
\end{eqnarray*}
\item[(iii)] superpotential data: The superpotential for the extended $\bigvee$-system is given by (\ref{genB}) with
\[
{\bf k}^{ext}= \{ -2k  \,,\underbrace{1\,,\ldots\,,1}_{l}\}\,.
\]
\end{itemize}
\end{itemize}
\end{thm}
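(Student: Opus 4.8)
The plan is to treat the Landau--Ginzburg superpotential as the object that intertwines the two sides of the correspondence, and to trace the Legendre transformation of Section~\ref{Legendre} through it. On the $\bigvee$-side, the extended configuration is a \emph{rational} WDVV solution generated by the polynomial superpotential (\ref{genA}) in the $A_l$ case and (\ref{genB}) in the $C_l$ case, with the exponent data ${\bf k}^{ext}$ recorded in part~(iii) of each bullet; on the orbit-space side, the almost-dual prepotential of $\mathbb{C}^{l+1}/\widetilde{W}$ constructed in \cite{DZ,DSZZ} is a \emph{trigonometric} WDVV solution, which itself admits a trigonometric superpotential description. Since the Legendre transformation sends rational solutions to trigonometric ones, the strategy is to show that the Legendre transform of the rational superpotential is precisely the trigonometric superpotential carrying the extended affine Weyl Frobenius structure, and then to invoke the fact that the almost-dual prepotential is determined by its superpotential. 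I will present the two families in parallel, since the method is identical; the remaining classical types follow by the same route using their small orbits from Theorem~\ref{Serg}.

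First I would confirm that the data listed in (i)--(iii) genuinely defines an extended $\bigvee$-system, i.e.\ that it satisfies the constraints of Lemma~\ref{metriclemma} and Theorem~\ref{basictheorem}. For $A_l$ this reduces to substituting $h_w=-(l+1-k)$ and $h_\alpha=1$ into the relation of Example~\ref{AnEx} (with rank $l$), which fixes the perpendicular scale $(n^\vee,n^\vee)$; one checks that the product structure $k_0 k_j=-(l+1-k)$ and $k_i k_j=1$ coming from ${\bf k}^{ext}=\{-(l+1-k),1,\ldots,1\}$ reproduces exactly these multiplicities in (\ref{genA}). For $C_l$ one substitutes $h_w=-2k$, $h_n=-2k(s+2k)$ and $s=-2(l-(k+m))$ into the $B$-type equation of Theorem~\ref{basictheorem} and verifies consistency against (\ref{genB}). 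This establishes the rational superpotential description of the starting point in both cases.

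Next I would carry out the Legendre transformation along the distinguished vector $\partial/\partial z^\perp$, exactly as in the worked $A_n$ computation of Section~\ref{Legendre}. Because the change of variables (\ref{changeofvariable}) involves only the covectors of $\mathcal{U}^\prime$, i.e.\ the small-orbit data, inverting it exponentiates the small-orbit coordinates; writing each root $\alpha=w_i-w_j$ via the small-orbit property then converts every covector contribution into a difference of exponentials, and integration yields a trigonometric prepotential of the form (\ref{trigonometric}) in which $\hat z_\perp$ enters only through the cubic term. Performing the same exponential substitution at the level of the superpotential turns (\ref{genA}) / (\ref{genB}) into a product of trigonometric factors, i.e.\ into a Laurent-type superpotential in the exponentiated variables, with the distinguished exponent $-(l+1-k)$ (resp.\ $-2k$) surviving as a marked branch.

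The hard part is the final identification: that the covector set $\widehat{\mathcal{U}}$ and multiplicities $c_\sigma$ emerging from the Legendre transform coincide \emph{exactly} with those of the almost-dual prepotentials of \cite{DZ,DSZZ}, together with the correct Euler/flat structure. This is essentially a bookkeeping problem of normalisations: reconciling the choice of marked node of $\widetilde{W}$ with the sign and magnitude of the distinguished exponent in ${\bf k}^{ext}$, and matching the flat-structure constant $m$ of the $C_l$ orbit space to the short-root multiplicity shift $s=-2(l-(k+m))$ through the relation $h_n=-2k(s+2k)$. Once the exponentiated superpotential is shown to agree with the trigonometric superpotential whose branch-point data define the extended affine Weyl Frobenius manifold, the two trigonometric prepotentials must coincide, since each is the same residue/period functional of its superpotential; this closes the argument for both the $A_l$ and $C_l$ families, and hence for the general classical statement.
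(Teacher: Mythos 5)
Your first two steps (verifying that the data of (i)--(iii) satisfies Lemma \ref{metriclemma} and Theorem \ref{basictheorem}, and running the Legendre transformation of Section \ref{Legendre} along $\partial/\partial z^\perp$) do match the paper's setup, and identifying the rational superpotentials (\ref{genA}), (\ref{genB}) in product form with the extended $\bigvee$-systems is indeed how the paper proceeds. The gap is in your final step. You assert that once the exponentiated superpotential agrees with the trigonometric superpotential of \cite{DZ,DSZZ}, ``the two trigonometric prepotentials must coincide, since each is the same residue/period functional of its superpotential.'' This is false as stated: the residue formulas of Theorem \ref{superpotential} compute the almost-dual structure from the \emph{pair} $\{\lambda,\omega\}$, and the primary differential is exactly what differs between the two sides ($\omega=dz$ for the rational Hurwitz space, $\omega=d\tilde z$ for the orbit space). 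A change of primary differential does induce a Legendre transformation between the underlying Frobenius manifolds $F_{\{\lambda,\omega\}}$ and ${\hat F}_{\{\lambda,\hat\omega\}}$, but the induced map between their \emph{almost duals} is not an ordinary Legendre transformation: it is the twisted Legendre transformation of \cite{RS}, generated by the field $E\circ\partial$, which is in general non-constant and is not a symmetry of the simple type used in Section \ref{Legendre}. So nothing in your argument shows that the Section~\ref{Legendre} transform of the extended $\bigvee$-system --- an ordinary Legendre transformation along $z^\perp$ --- is the map that lands on the orbit-space almost dual; nor does anything in your argument single out $z^\perp$ as the correct direction.

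This missing piece is precisely the content of the paper's two lemmas, which are the heart of its proof and have no counterpart in your proposal: for the special choice $\partial=\partial_{t^\circ}$, a residue computation shows that the twisted field $E\circ\partial_{t^\circ}$ is \emph{constant} in the flat coordinates of the intersection form $g$ (equal to $\frac{1}{k}\sum_j \partial_{z^j}$ in type $A$ and to $\frac{1}{4k}\,\partial_{z^\circ}$ in types $B,C,D$) and \emph{perpendicular} to $TV$. Only then does the twisted Legendre transformation collapse to a normal Legendre transformation along the extended direction, i.e.\ to the transformation you performed in your second step, and only then does the identification with the orbit-space almost dual follow. To repair your argument you must either prove these two lemmas (or equivalents), or abandon the shortcut and compute the trigonometric almost dual directly from $\{\lambda(\tilde z),d\tilde z\}$ by residues, comparing it term-by-term with the output of Section \ref{Legendre} --- a route the paper explicitly mentions and declines in favour of the twisted-Legendre argument. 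The ``bookkeeping of normalisations'' you describe only becomes meaningful after one of these is in place.
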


The proof will utilize a Hurwitz space construction. For extended affine Weyl groups of type $A$ this construction was given in \cite{DZ}. For types
$B\,,C\,,D$ this was given in \cite{DSZZ}.

Hurwitz spaces are moduli spaces of pairs
$(\mathcal{C},\lambda)\,,$ where $\mathcal{C}$ is a Riemann
surface of degree $g$ and $\lambda$ is a meromorphic function on
$\mathcal{C}$ of degree $N\,.$ It was shown in \cite{D1}
that such spaces may be endowed with the structure of a Frobenius
manifold. The $g=0$ case is particularly simple - meromorphic
functions from the Riemann sphere to itself are just given by
rational functions. It is into this category of Frobenius
manifolds that the examples constructed above fall.

More specifically, the Hurwitz space $H_{g,N}(k_1\,,\ldots\,,k_l)$
is the space of equivalence classes
$[\lambda:\mathcal{C}\rightarrow\mathbb{P}^1]$ of $N$-fold branched covers\footnote{Dubrovin uses the
different notation $H_{g,k_1-1\,,\ldots\,,k_l-1}\,.$}
with:

\begin{itemize}
\item $M$ simple ramification points $P_1, \dots,
P_M\in{\mathcal{L}}$ with distinct {\it finite} images $l_1,
\dots, l_M\in {\mathbb C}\subset {\mathbb P}^1$; \item the
preimage $\lambda^{-1}(\infty)$ consists of $l$ points:
$\lambda^{-1}(\infty)=\{\infty_1, \dots,\infty_l\}$, and the
ramification index of the map $p$ at the point $\infty_j$ is $k_j$
($1\leq k_j\leq N$).
\end{itemize}

\noindent  The Riemann-Hurwitz formula
implies that the dimension of this space is $M=2g+l+N-2\,.$ One has
also the equality $k_1+\dots +k_l=N$. For $g>0$ one has to introduce
a covering space, but this is unnecessary in the $g=0$ case that will be
considered here.

In this construction there is a certain ambiguity; one has to choose a so-called primary differential (also known
as a primitive form). Different choices produce different solutions to the WDVV equations, but such solutions are
related by Legendre transformation $S_\kappa\,.$ The Hurwitz data $\{\lambda,\omega\}$ from which one constructs a
solution $F_{\{\lambda,\omega\}}$ consists of the map $\lambda$ (also known as the superpotential) and a particular
primary differential $\omega\,$ \cite{D1}. Thus, again schematically, one has:
\[
F_{\{\lambda,\omega\}} \overset{S_\kappa}{\longleftrightarrow} {\hat F}_{\{\lambda,{\hat\omega}\}}
\]
(note the map $\lambda$ does not change, though it might undergo a coordinate transformation). The metrics
$<,>\,,(,)$ and multiplications $\circ\,,\star$ are determined by calculating certain residues at the critical
points of the map $\lambda\,:$

\begin{thm}\label{superpotential}

\begin{eqnarray*}
<\partial',\partial{''}>& = & - \sum \res_{d\lambda=0}
\left\{\frac{\partial'(\lambda(v)) \partial{''}\lambda(v)
}{\lambda^\prime(v)} \, \omega \right\}\,, \\
<\partial'\circ\partial{''},\partial{'''}> & = & - \sum
\res_{d\lambda=0} \left\{ \frac{\partial'\lambda(v)
\partial{''}\lambda(v) \partial{'''}\lambda(v)}{\lambda^\prime(v)}\,\omega \right\} \,, \\
(\partial',\partial{''}) & = & - \sum \res_{d\lambda=0}
\left\{\frac{\partial'\log\lambda(v) \partial{''}\log\lambda(v)}{(\log\lambda)^\prime(v)} \, \omega\ \right\}\,,\\
(\partial'\star\partial{''},\partial{'''}) & = & - \sum
\res_{d\lambda=0}  \left\{ \frac{\partial'\log\lambda(v)
\partial{''}\log\lambda(v)\partial{'''}\log\lambda(v)}{(\log\lambda)^\prime(v)}\,\omega \right\} \,.
\end{eqnarray*}
Here $\omega$ is the primary differential.

\end{thm}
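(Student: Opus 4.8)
The plan is to recall Dubrovin's genus-zero Hurwitz-space Frobenius structure $\{<,>,\circ,e,E\}$ from \cite{D1}, to read off the first two residue formulae by localising at the critical points of $\lambda$, and then to obtain the two almost-dual formulae by pushing the relations $g(X,Y)=\eta(E^{-1}\circ X,Y)$ and $X\star Y=E^{-1}\circ X\circ Y$ through them.

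First I would set up canonical coordinates. On the open dense locus where $\lambda$ has $M$ distinct simple critical points $P_1,\dots,P_M$, the critical values $u_i=\lambda(P_i)$ are local coordinates on the Hurwitz space. At each $P_i$ the differential $d\lambda$ has a simple zero, so $\lambda-u_i$ vanishes to order two and $v_i:=\sqrt{\lambda-u_i}$ is a local coordinate there. The key computational input from \cite{D1} is that a tangent vector $\partial$ to the Hurwitz space acts on $\lambda=\lambda(v)$ by differentiating with the curve coordinate $v$ held fixed, and that $\partial\lambda$ then has poles exactly at the $P_i$; hence the right-hand sides of the first two displays are genuinely finite sums of residues supported at the $P_i$, with the factor $1/\lambda'(v)$ supplying the simple pole that the residue detects.

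Second I would evaluate these residues in canonical coordinates. Substituting $\partial=\partial_{u_i}$ and using $\lambda=u_i+v_i^2+\cdots$ together with the local form of $\omega$ at $P_i$, one checks that $<\partial_{u_i},\partial_{u_j}>$ is diagonal and that the totally symmetric triple residue reproduces the semisimple product $\partial_{u_i}\circ\partial_{u_j}=\delta_{ij}\partial_{u_i}$. Commutativity and the Frobenius identity $<X\circ Y,Z>=<X,Y\circ Z>$ are then immediate from the symmetry of the integrand, while the unit and Euler field are the vector fields acting by $e(\lambda)=1$ and $E(\lambda)=\lambda$. The step I expect to be the main obstacle is flatness of $<,>$ and the identification of its flat coordinates: this is precisely what the primary differential $\omega$ is for, the flat coordinates being the period/residue functionals built from $\omega$, and proving that the residue metric is flat in them is the genuine analytic content. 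Here I would invoke \cite{D1} rather than reprove it, different admissible choices of $\omega$ accounting for the Legendre ambiguity noted above.

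Finally, for the two almost-dual displays I would use the almost-duality construction \cite{D2} recalled above, together with the fact that $E$ acts as $\circ$-multiplication by $\lambda$: on the $i$-th idempotent $E^{-1}\circ$ is multiplication by $1/u_i=1/\lambda(P_i)$, so inside the residue at $P_i$ it simply inserts a factor $1/\lambda$. Applying $g(X,Y)=\eta(E^{-1}\circ X,Y)$ to the first formula inserts one such factor and turns $\partial'\lambda\,\partial''\lambda/\lambda'$ into $\partial'\lambda\,\partial''\lambda/(\lambda\,\lambda')$, which is exactly $\partial'\log\lambda\,\partial''\log\lambda/(\log\lambda)'$, giving the intersection form $(\,,)$. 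The $\star$-triple product, via $g(X\star Y,Z)=\eta(E^{-2}\circ X\circ Y,Z)$, inserts two such factors, matching the power $1/\lambda^2$ produced by rewriting the symmetric triple residue in terms of $\log\lambda$, which yields the last display. The only point needing care is that these extra factors of $1/\lambda$ create no new residues: the sum runs only over the zeros of $d\lambda$, and at a generic critical point $\lambda(P_i)=u_i\neq 0$, so $1/\lambda$ is holomorphic there and the residue locus is unchanged.
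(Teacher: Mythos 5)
The paper gives no proof of this theorem at all: it is stated as recalled background, imported from Dubrovin's construction of Frobenius structures on Hurwitz spaces \cite{D1} together with the almost-duality formalism of \cite{D2}, and the surrounding text simply points to those sources. Your proposal is therefore not competing with an internal argument, and judged on its own terms it is the standard derivation and is correct. The algebra behind the two dual displays checks out: since $\partial\log\lambda=\partial\lambda/\lambda$ and $(\log\lambda)^\prime=\lambda^\prime/\lambda$, the integrand $\partial'\log\lambda\,\partial''\log\lambda/(\log\lambda)^\prime$ equals $\partial'\lambda\,\partial''\lambda/(\lambda\,\lambda^\prime)$, i.e.\ exactly one insertion of $1/\lambda$, matching $g(X,Y)=\eta(E^{-1}\circ X,Y)$ with $E$ acting in the residue as multiplication by $\lambda$ (eigenvalue $u_i=\lambda(P_i)$ on the $i$-th idempotent), while the triple product acquires $1/\lambda^{2}$, matching $E^{-2}$ via $g(X\star Y,Z)=\eta(E^{-2}\circ X\circ Y,Z)$; your remark that these insertions create no new poles at the residue locus (generically $u_i\neq 0$, the locus $u_i=0$ being part of the discriminant excluded by almost duality) is the right justification. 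You also correctly isolate the only genuinely hard ingredient --- flatness of the residue metric and the role of the primary differential $\omega$ in producing its flat coordinates --- and delegate it to \cite{D1}, which is precisely what the paper does for the whole theorem. One small slip in wording: $\partial'\lambda$ (computed at fixed $v$) is regular at the critical points $P_i$ --- its poles, if any, sit at the poles of $\lambda$ --- and the simple poles that the residues detect come solely from the factor $1/\lambda^\prime(v)$, as your very next clause in fact states; this does not affect the argument.
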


\medskip

\noindent We divide the proof into the $A$ case and the $B\,,C\,,D$ cases.

\subsection{Extended Affine Weyl orbit spaces of type $A$}

In \cite{DZ} it was shown, given an extended affine group $\widetilde{W}^{(k)}(A_l)$, that the orbit space $\mathbb{C}^{l+1}/\widetilde{W}^{(k)}(A_l)$
maybe endowed with the structure of a Frobenius manifold. Additionally, it was shown that this space is isomorphic to the space of trigonometric polynomials of bi-degree $(k,l+1-k)$, namely, functions of the form
\begin{equation}
\lambda({\tilde{z}}) = e^{ik\tilde{z}} + a_1 e^{i(k-1)\tilde{z}} + \dots + a_{l+1} e^{-i(l+1-k)\tilde{z}}
\label{AEWA}
\end{equation}
with the Frobenius manifold structures being given by Theorem \ref{superpotential} with the choice of primary differential $\omega=d\tilde{z}\,.$

This space is related, via a Legendre transformation, to the Hurwitz space $\mathcal{M}_{k,l+1-k}$ of rational functions of the form
\begin{equation}
\lambda(z) = z^k + \alpha_1 z^{k-2} + \ldots + \frac{t^\circ}{z-z^\circ} + \ldots + \frac{\alpha_{l+1}}{(z-z^\circ)^{l+1-k}}
\label{rat}
\end{equation}
with primary differential $\omega=dz\,.$ The coefficient $t^\circ$ turns out, by evaluating certain residues, to be a flat coordinate
and hence a change in primary differential is given by
\[
d\tilde{z} = \left\{ \partial_{t^\circ} \lambda(z) \right\} dz
\]
so $\tilde{z} = \log(z-z^\circ)\,,$
and this induces the Legendre transformation between the two Frobenius manifolds, i.e. this change of primary differential induces a change of variable that maps (\ref{AEWA}) to (\ref{rat}).
Thus the Frobenius manifold structures on
$\mathbb{C}^{l+1}/\widetilde{W}^{(k)}(A_l)$ and $\mathcal{M}_{k,l+1-k}$ are related by a Legendre transformation.

Rewriting the rational function (\ref{rat}) in this form
\[
\lambda(z) = \left.\frac{ \prod_{i=1}^{l+1} (z-z_i) }{(z-z^\circ)^{l+1-k}}\right|_{\sum_{i=1}^{l+1} z_i - (l+1-k) z^\circ =0}
\]
thus gives a $\bigvee$-system which is of extended type, i.e. an extension of the $A_l$ $\bigvee$-system constructed above,
with the zeros of the superpotential being flat coordinates for the metric $g\,.$ This is a special case, with
\[
{\bf k}=\{ -(l+1-k)\,,\underbrace{1\,,\ldots\,,1}_{l+1}\}
\]
of the general superpotential (\ref{genA}).

At this stage one could directly calculate the almost-dual prepotential from the
trigonometric superpotential, and the result is a solution which is derived from a trigonometric $\bigvee$-system \cite{RS}. However, we will take a different approach and perform the
twisted Legendre transformation.

\[
\begin{array}{ccc}
  \mathcal{M}_{k,l+1-k} & \xleftrightarrow{\text{Legendre~transformation}}&  {\mathbb{C}^{l+1}/\widetilde{W}^{(k)}(A_l)} \\
&&\\
\Bigg\downarrow& & \Bigg\downarrow \\
&&\\
\left\{\begin{array}{c} {\rm almost~dual}\\{\rm~prepotential~of}\\{\rm rational~type} \end{array}\right\}
& \xleftrightarrow{\text{twisted~Legendre~transformation}}
& \left\{\begin{array}{c} {\rm almost~dual}\\{\rm~prepotential~of}\\{\rm trigonometric~type} \end{array}\right\} \\
\end{array}
\]

\noindent It turns out that, with the choice $\partial=\partial_{t^\circ}$ the twisted Legendre field is actually constant.

\begin{lem}
The twisted Legendre field $E \circ \frac{\partial~}{\partial t^\circ}$ is constant in the $\{z^i\}$-variables.
Moreover, the field is perpendicular to the
space
\[
TV={\rm span}\left\{ \frac{\partial~}{\partial w^i}\,,i=1\,,\ldots\,, l \right\}\,,
\]
where the variables $z^i$ and $w^i$ are related by the linear change of variables
\begin{eqnarray*}
w^i & = & z^i - \frac{1}{l+1} \sum_{j=1}^{l+1} z^j\,, \qquad j=1\,,\ldots\,, l+1\,,\\
w^\perp & = & \frac{1}{l+1-k} \sum_{j=1}^{l+1} z^j\,,
\end{eqnarray*}
together with the constraint $\sum_{i=1}^{l+1} w^i=0\,.$
\end{lem}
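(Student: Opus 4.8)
The plan is to compute the twisted Legendre field $\tilde\partial = E\circ\frac{\partial}{\partial t^\circ}$ explicitly in terms of the ramification data of the superpotential, and verify both claims (constancy and perpendicularity) by direct residue computations using Theorem \ref{superpotential}. The key observation is that the twisted field is defined by the intertwining relation $<\tilde\partial\circ X, Y> = (\partial_{t^\circ}, \cdot)$-type data, so its components in the flat $\{z^i\}$-coordinates are recovered from residues of $\lambda$ at its critical points. Since $t^\circ$ is already known (from the residue evaluation recalled just before the Lemma) to be a flat coordinate of the metric $g$, the vector $\frac{\partial}{\partial t^\circ}$ has a simple, explicit form, and multiplication by $E$ in the Frobenius algebra can be evaluated pointwise.

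\medskip

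\textbf{Step 1.} First I would express $E\circ\frac{\partial}{\partial t^\circ}$ through the multiplication formulas of Theorem \ref{superpotential}. Writing the superpotential in the factored form
\[
\lambda(z) = \left.\frac{\prod_{i=1}^{l+1}(z-z_i)}{(z-z^\circ)^{l+1-k}}\right|_{\sum z_i - (l+1-k)z^\circ = 0}\,,
\]
the Euler field $E$ acts as the grading field $\lambda\mapsto z\,\lambda'(z)$ (up to normalization), so $E\circ X$ corresponds at the level of superpotentials to multiplication of $\partial_X\lambda$ by $z\,\lambda'/\lambda$ or the analogous ratio appearing in the $\star$-multiplication. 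The point is that $\partial_{t^\circ}\lambda$ is, by construction, the pole-part derivative, and its product with $E$ collapses to a constant after residue evaluation because the highest-order terms cancel.

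\medskip

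\textbf{Step 2.} Next I would change to the $\{w^i, w^\perp\}$-variables via the stated linear substitution. Under this change, the decomposition $V^{ext}=V\oplus V^\perp$ becomes manifest: the $w^i$ with $\sum w^i=0$ span the $A_l$-space $V$, while $w^\perp$ is the extended direction. I would then show that $\tilde\partial$, once computed in Step 1, has no components along $\frac{\partial}{\partial w^i}$ by checking $<\tilde\partial, \frac{\partial}{\partial w^i}> = 0$ for each $i$; concretely this is a residue identity asserting that the relevant sum over critical points telescopes, using $\sum_{i=1}^{l+1} w^i = 0$. Perpendicularity then follows since the metric on $V^{ext}$ is block-diagonal by \eqref{extendedmetric}.

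\medskip

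\textbf{The hard part will be} handling the residue at the pole $z=z^\circ$ of order $l+1-k$, which is where all the non-trivial contributions to $E\circ\partial_{t^\circ}$ concentrate. The simple finite critical points contribute terms that are straightforwardly symmetric in the $z_i$, but the higher-order pole requires either a careful Laurent expansion of $z\lambda'/\lambda$ about $z^\circ$ or a contour-deformation argument (sum of all residues equals zero) to transfer the computation to $z=\infty$, where $\lambda$ behaves like $z^k$ and the leading behavior is transparent. I expect constancy in the $\{z^i\}$ to emerge precisely because after this transfer the residue picks out only the top-degree coefficient, which is normalized to a fixed value independent of the moduli; the perpendicularity is then a corollary of that same constant pointing purely along $\partial/\partial w^\perp$.
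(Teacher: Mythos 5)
Your overall plan (compute the field explicitly, observe it is constant, then read off perpendicularity from the block structure of the metric in the $w$-variables) has the same shape as the paper's argument, but your Step 1 --- the actual computation of $E\circ\frac{\partial~}{\partial t^\circ}$ --- is missing the one idea that makes the computation feasible, and as sketched it would not go through. The paper never computes the $\circ$-product with $E$ at all: it uses the almost-duality identity $g(X,Y)=\eta(E^{-1}\circ X,Y)$, which gives immediately
\[
g\Bigl(E\circ\tfrac{\partial~}{\partial t^\circ},\tfrac{\partial~}{\partial z^i}\Bigr)=\eta\Bigl(\tfrac{\partial~}{\partial t^\circ},\tfrac{\partial~}{\partial z^i}\Bigr)\,,
\]
and the right-hand side is a single standard residue sum, equal to the constant $-\frac{1}{l+1-k}$ for every $i$. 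Since $g$ is an explicit \emph{constant} matrix in the $\{z^i\}$-coordinates (these are its flat coordinates), inverting this Gram matrix yields $E\circ\partial_{t^\circ}=\frac{1}{k}\sum_j\partial_{z^j}$, hence constancy; passing to the $w$-variables the field is proportional to $\partial_{w^\perp}$, and $g$ is block-diagonal there, giving perpendicularity. Your proposal instead tries to evaluate $E\circ\partial_{t^\circ}$ directly at the superpotential level, and this has two concrete problems. First, your formula for the Euler field, ``$\lambda\mapsto z\lambda'(z)$ up to normalization'', is not correct: what holds is a quasi-homogeneity relation of the schematic form $E(\lambda)=\lambda-\tfrac{1}{k}z\lambda'$, and the $\lambda$-term you drop changes every residue you propose to compute. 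Second, and more seriously, you never explain how residue pairings are converted into \emph{components} of a vector field. The pairings your residue formulas naturally produce are against $\eta$ (you invoke $g$ nowhere, having skipped the duality identity), and $\eta$ is \emph{not} constant in the $\{z^i\}$-coordinates --- those are flat for $g$, not for $\eta$ --- so ``components are recovered from residues'' does not follow; you would be left inverting a non-constant Gram matrix, and constancy would not simply ``emerge from the top-degree coefficient'' as you assert. Your worry about the higher-order pole at $z=z^\circ$ is a symptom of having taken this harder road: in the paper's route no such analysis is needed.

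A smaller point: in Step 2 you propose to verify $\langle\tilde\partial,\partial_{w^i}\rangle=0$ by a telescoping residue identity, but the perpendicularity asserted in the Lemma is with respect to $g$, and once the field is known to be a constant multiple of $\partial_{w^\perp}$ no residue identity is needed at all --- it is immediate from the block-diagonal form of $g$ in the $(w^i,w^\perp)$ variables. As written, you check orthogonality in one metric and then justify it by block-diagonality of the other, which conflates the two.
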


\begin{proof}
We calculate
\[
g\left( E \circ \frac{\partial~}{\partial t^\circ}, \frac{\partial~}{\partial z^i} \right)
\]
Using the above formulae,
\begin{eqnarray*}
g\left( E \circ \frac{\partial~}{\partial t^\circ}, \frac{\partial~}{\partial z^i} \right) & = & \eta\left( \frac{\partial~}{\partial t^\circ}, \frac{\partial~}{\partial z^i} \right)\,,\\
& = & \sum \res_{d\lambda=0}\left\{ \frac{\frac{\partial \lambda}{\partial t^\circ} \frac{\partial\lambda}{\partial z^i} }{\lambda^\prime} dz \right\}\,,\\
& = & \sum \res_{d\lambda=0}\left\{ \frac{1}{(z-z^\circ)} \frac{\partial\lambda}{\partial z^i} \frac{\lambda}{\lambda^\prime} dz \right\}\,.\\
\end{eqnarray*}
A standard residue calculation gives
\[
g\left( E \circ \frac{\partial~}{\partial t^\circ}, \frac{\partial~}{\partial z^i} \right) = -\frac{1}{l+1-k}\,.
\]
In these coordinates,
\[
g= \sum_{i=1}^{l+1} (dz^i)^2 - \frac{1}{l+1-k} \left( \sum_{j=1}^{l+1} dz^j\right)^2\,,
\]
and hence
\[
E\circ \frac{\partial~}{\partial t^\circ} = \frac{1}{k} \sum_{j=1}^{l+1} \frac{\partial~}{\partial z^j}\,.
\]
In the $w^i$ variables,
\[
E\circ \frac{\partial~}{\partial t^\circ} = - \frac{l+1}{k(l+i-k)} \frac{\partial~}{\partial w^\perp}\,.
\]
and
\[
g=\left.\sum_{i=1}^l (dw^i)^2\right|_{\sum w^i=0} - \frac{k(l+1-k)}{l+1} (dw^\perp)^2
\]
and hence the twisted Legendre field is perpendicular to the space $TV\,.$
\end{proof}

\subsection{Extended Affine Weyl orbit spaces of type $B\,,C\,,D $}

In a similar way, given an extended affine Weyl group of type $B\,,C\,,D$ there exists Frobenius manifold structures on the corresponding orbit
space. In \cite{DZ} this was constructed for a specific choice of marked node and in \cite{DSZZ} this construction was generalized to the case of an arbitrary marked node. Thus given a extended affine Weyl goup of $C$-type, $\widetilde{W}^{(k)}(C_l)$ one can construct a Frobenius manifold structure on the orbit space ${\mathbb{C}^{l+1}/\widetilde{W}^{(k)}(C_l)} \,.$ However, unlike the $A$-case, there is an additional freedom in the choice of flat structure on the orbit space, and this freedom is defined in terms of an additional integer $0 \leq m \leq l-k\,.$ Thus the Frobenius manifold structure on the orbit space - defined by the pair $(k,l)$ -  depends on the triple $(k,l,m)\,.$ This Frobenius manifold will be denoted
$\mathcal{M}_{k,m}(C_l)\,.$

This construction also covers the orbit spaces, and their Frobenius manifold structures, for the extended affine Weyl groups
$\widetilde{W}^{(k)}(B_l)$ and $\widetilde{W}^{(k)}(D_l)\,.$ The ring of invariant polynomials (freely generated by an appropriate Chevalley-type theorem) for these groups may be obtained from those constructed for the group $\widetilde{W}^{(k)}(C_l)$ by simple changes of variable, and this leads to isomorphic Frobenius manifolds. thus it suffices to study the orbit space ${\mathbb{C}^{l+1}/\widetilde{W}^{(k)}(C_l)}$ with the Frobenius manifold structure $\mathcal{M}_{k,m}(C_l)\,.$

Furthermore, it was shown that $\mathcal{M}_{k,m}(C_l)$ coincides with the Frobenius manifold structure on the space of cosine-Laurent
series of tri-degree $(2k,2m,2l)\,,$ namely functions of the form
\begin{equation}
\lambda(\tilde{z}) = \frac{1}{(\cos^2 \tilde{z} -1)^m} \sum_{j=0}^{l} a_j \cos^{2(k+m-j)}(\tilde{z})\,,
\label{AEWB}
\end{equation}
with the choice of primary differential $\omega=d\tilde{z}\,.$

This space is related, via a Legendre transformation, to a space of $\mathbb{Z}_2$-invariant rational functions of the form
\begin{equation}
\lambda(z) = z^{2m} + \alpha_{m-1} z^{2(m-1)} + \ldots + \alpha_0 + \sum_{r=1}^{l-(k+m)} \frac{\beta_r}{z^{2r}} + \
\sum_{s=1}^k \frac{\gamma_s}{\left(z^2 - (z^\circ)^2\right)^s}
\label{ratB}
\end{equation}
with primary differential $\omega=dz\,.$ We denote this space $\mathcal{M}^{\mathbb{Z}_2}_{m,l-(k+m),k}\,.$

The coefficient $t^\circ$, defined by the term
\[
\lambda(z) = \ldots + \frac{z^\circ t^\circ}{\left(z^2-(z^\circ)^2\right)} + \ldots
\]
turns out, by evaluating certain residues, to be a flat coordinate
and hence a change in primary differential is given by
\[
d\tilde{z} = \left\{ \partial_{t^\circ} \lambda(z) \right\} dz\,.
\]
so (up to an overall constant that may be ignored) $z=iz^\circ \cot\tilde{z}\,,$
and this induces the Legendre transformation between the two Frobenius manifolds, i.e. this change of primary differential induces a change of variable that maps (\ref{AEWB}) to (\ref{ratB}).
Thus the Frobenius manifold structures on $\mathcal{M}_{k,l}(C_l)$  and the $\mathbb{Z}_2$-graded Hurwitz space $\mathcal{M}^{\mathbb{Z}_2}_{m,l-(k+m),k}$ are related by a Legendre transformation.

As in the $A$-case, the twisted Legendre transformation between the corresponding almost dual manifolds turns out to a normal Legendre transformation. The extended-$\bigvee$-systems may be easily calculated by writing the superpotential in the form
\[
\lambda(z) = \frac{ \displaystyle{\prod_{i=1}^l \left(z^2 - (z^i)^2 \right) }}{ \displaystyle{z^{2(l-(k+m))} \left(z^2 - (z^\circ)^2\right)^{2k}}}\,.
\]
Unlike the $A$-case the $z_o$ variable is not constrained: the flat coordinates for the metric are $z^i\,,i=0\,,\ldots\,,l\,.$
This is a special case, with $s=-2\left(l-(k+m)\right)$ and
\[
{\bf k}=\{    -2k\,,   \underbrace{1\,,\ldots\,,1}_l    \}
\]
of the general superpotential (\ref{genB}).

\begin{lem}
The twisted Legendre field $E \circ \frac{\partial~}{\partial t^\circ}$ is constant in the $\{z^i\}$-variables. Moreover, the field is perpendicular to the
space
\[
TV={\rm span}\left\{ \frac{\partial~}{\partial z^i}\,,i=1\,,\ldots\,, l \right\}\,.
\]
\end{lem}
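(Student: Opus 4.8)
The plan is to argue along the same lines as the corresponding lemma for type $A$, the substance of the proof being a single residue computation. The starting point is almost-duality: since $g(X,Y)=\eta(E^{-1}\circ X,Y)$, setting $X=E\circ\frac{\partial~}{\partial t^\circ}$ collapses the twist and gives
\[
g\left(E\circ\frac{\partial~}{\partial t^\circ},\frac{\partial~}{\partial z^i}\right)=\eta\left(\frac{\partial~}{\partial t^\circ},\frac{\partial~}{\partial z^i}\right)
\]
for each flat coordinate $z^i$ of the intersection form, and I would evaluate the right-hand side with the residue formula of Theorem \ref{superpotential} using the primary differential $\omega=dz$. The two logarithmic derivatives needed are read off the superpotential: the defining term gives $\frac{\partial\lambda}{\partial t^\circ}=\frac{z^\circ}{z^2-(z^\circ)^2}$, while the product form gives $\frac{\partial\lambda}{\partial z^i}=\frac{-2z^i}{z^2-(z^i)^2}\,\lambda$ for $i=1,\ldots,l$ and $\frac{\partial\lambda}{\partial z^\circ}=\frac{4kz^\circ}{z^2-(z^\circ)^2}\,\lambda$.

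The heart of the argument, and what distinguishes this from type $A$ (where the analogous pairing was the nonzero constant $-\tfrac{1}{l+1-k}$), is that for $i=1,\ldots,l$ the sum of residues over the critical points actually vanishes. For these indices the relevant one-form is
\[
\frac{-2z^iz^\circ}{(z^2-(z^\circ)^2)(z^2-(z^i)^2)}\,\frac{\lambda}{\lambda'}\,dz\,,
\]
and I would invoke the global residue theorem on $\mathbb{P}^1$ to show it has no poles away from the critical points. At each simple zero $z=\pm z^i$ of $\lambda$ the simple zero of $\lambda/\lambda'$ cancels the simple pole of $\tfrac{1}{z^2-(z^i)^2}$; at each pole $z=\pm z^\circ$ of order $2k$ the factor $\lambda/\lambda'$ again has a simple zero cancelling the pole of $\tfrac{1}{z^2-(z^\circ)^2}$; and a degree count at the fixed branch point $z=0$ and at $z=\infty$ (where $\lambda\sim z^{2(m-k)}$) shows the form decays faster than $z^{-1}$ and so contributes nothing. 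Hence the critical residues sum to zero and $\eta(\frac{\partial~}{\partial t^\circ},\frac{\partial~}{\partial z^i})=0$ for $i=1,\ldots,l$. Since the superpotential construction makes the intersection form diagonal, $g=-2k\,(dz^\circ)^2+\sum_{i=1}^l(dz^i)^2$, this vanishing is precisely the asserted $g$-orthogonality to $TV$.

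To conclude that the field is constant I would run the identical computation for the index $z^\circ$: there $\lambda/\lambda'$ has only a simple zero while $(z^2-(z^\circ)^2)^{-2}$ has a double pole, so one simple pole survives at each of $z=\pm z^\circ$ and the residues localise there to a constant; dividing by $g_{\circ\circ}=-2k$ then shows $E\circ\frac{\partial~}{\partial t^\circ}$ is a constant multiple of $\frac{\partial~}{\partial z^\circ}$, in particular constant in the $\{z^i\}$ coordinates. The main obstacle is entirely the residue bookkeeping of the middle step: one must match the order of vanishing of $\lambda/\lambda'$ at every ramification point against the poles of the rational prefactor, and treat separately the degenerate parameter values (such as $m=k$ or $l=k+m$) at which the behaviour at $\infty$ or at $0$ changes. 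The $\mathbb{Z}_2$-symmetry $\lambda(-z)=\lambda(z)$ is what guarantees that the contributions at $+z^\circ$ and $-z^\circ$, and at $\pm z^i$, can be paired and handled together.
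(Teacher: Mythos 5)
Your proposal reaches the correct conclusion and its overall skeleton (almost-duality to collapse the twist at the start, the diagonal form $g=\sum_{i=1}^l(dz^i)^2-2k\,(dz^\circ)^2$ in the flat coordinates at the end) matches the paper, but the middle of your argument is genuinely different and substantially heavier than what the paper does. The paper never evaluates a single residue: it observes the identity
\[
4k\,\frac{\partial\lambda}{\partial t^\circ}=\frac{\partial \log\lambda}{\partial z^\circ}\,,
\]
which, once inserted into the residue formulas of Theorem \ref{superpotential}, converts the $\eta$-pairing into a $g$-pairing term by term,
\[
\eta\left(\frac{\partial~}{\partial t^\circ},\frac{\partial~}{\partial z^i}\right)
=-\sum\res_{d\lambda=0}\frac{\partial_{t^\circ}\lambda\,\partial_{z^i}\lambda}{\lambda'}\,dz
=-\frac{1}{4k}\sum\res_{d\lambda=0}\frac{\partial_{z^\circ}\log\lambda\,\partial_{z^i}\log\lambda}{(\log\lambda)'}\,dz
=\frac{1}{4k}\,g\left(\frac{\partial~}{\partial z^\circ},\frac{\partial~}{\partial z^i}\right)
\]
for \emph{all} $i=0,\ldots,l$ simultaneously; nondegeneracy of $g$ then gives $E\circ\partial_{t^\circ}=\frac{1}{4k}\partial_{z^\circ}$ at once, and perpendicularity to $TV$ is read off the diagonal metric. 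Amusingly, you write down exactly the two derivative formulas, $\partial_{t^\circ}\lambda=\frac{z^\circ}{z^2-(z^\circ)^2}$ and $\partial_{z^\circ}\lambda=\frac{4kz^\circ}{z^2-(z^\circ)^2}\lambda$, from which this identity is immediate, but you do not exploit it; instead you prove the vanishing of $\eta(\partial_{t^\circ},\partial_{z^i})$ by the global residue theorem on $\mathbb{P}^1$, matching zeros of $\lambda/\lambda'$ against poles of the rational prefactor. That route does work in the generic case (your pole-by-pole bookkeeping at $\pm z^j$, $\pm z^\circ$, $0$ and $\infty$ is correct), but it is exactly what forces you into the case analysis at $m=k$ and $l=k+m$ — where $\infty$ or $0$ becomes a ramification point and must be moved into the critical-point sum — and it exposes you to the normalization subtleties of the $\mathbb{Z}_2$-invariant Hurwitz space when you try to pin down the surviving constant in the $z^\circ$-pairing. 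The paper's identity sidesteps all of this: degenerate parameters, behaviour at infinity, and the explicit value of the constant are never needed, since everything is reduced to quoting the (already known) diagonal intersection form. In short: your proof is a correct but more laborious computational verification of facts that the paper obtains structurally; the trade-off is that your version makes the vanishing mechanism visible at the level of individual residues, while the paper's buys brevity and immunity from the edge cases.
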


\begin{proof}
Since
\[
4k \frac{\partial \lambda}{\partial t^\circ} = \frac{\partial \log\lambda}{\partial z^o}
\]
it immediately follows that, for all $i=0\,,\ldots\,,l\,:$
\[
g\left( E \circ \frac{\partial~}{\partial t^\circ}, \frac{\partial~}{\partial z^i} \right) = g\left(
\frac{1}{4k} \frac{\partial~}{\partial z^\circ} ,\frac{\partial~}{\partial z^i} \right)\,.
\]
Hence
\[
E \circ \frac{\partial~}{\partial t^\circ} = \frac{1}{4k} \frac{\partial~}{\partial z^\circ}\,.
\]
In these variables
\[
g=\sum_{i=1}^l (dz^i)^2 - 2 k (dz^\circ)^2\,,
\]
and hence it follows that the twisted Legendre field is perpendicular to the space $TV\,.$

\end{proof}

\noindent We end by noting two things:
\begin{itemize}
\item[(1)] To make the connection with Section \ref{section2}
\begin{eqnarray*}
&&TV\text{~is~spanned~by~} \left\{\frac{\partial~}{\partial w^i}\, {\rm ~for~}A{\rm ~type}\right\}\,, \left\{\frac{\partial~}{\partial z^i} {\rm ~for~}B,C,D{\rm ~type}\right\}\\
&&TV^\perp \text{~is~spanned~by~} \left\{\frac{\partial~}{\partial w^\perp}\,{\rm ~for~}A{\rm ~type}\right\}\,, \left\{\frac{\partial~}{\partial z^\perp} {\rm ~for~}B,C,D{\rm ~type}\right\}
\end{eqnarray*}
So the twisted Legendre field $E\circ \frac{\partial~}{\partial t^o}$ is perpendicular to the space $V$ (the constraint in the $A$-case, $\sum_{i=1}^{l+1} w^i=0\,,$ is just the manifestation of the standard representation of the $A_{l}$ roots system as a hyperplane
in $\mathbb{R}^{l+1}\,.$
\item[(2)]
Since $\tilde{\partial}$ is constant in these (flat)-coordinates, the twisted Legendre transformation is actually a normal Legendre transformation, and such a normal Legendre transformation has already been performed in Section \ref{Legendre}
\end{itemize}
The choice of original Legendre field $\partial=\partial_{t^\circ}$ was very special - others choices would have resulted in a non-constant twisted Legendre field. It may be shown that this special properties comes from the fact that
$\mu=-\frac{1}{2}$ lies in the spectrum of the underlying Frobenius manifold \cite{SS}.

\section{Conclusions}

As remarked earlier, and as is apparent from Figure \ref{fig:fig2}, an extended $\bigvee$-system, based, say, on the root system $R_W$ of a classical Weyl group $W$ of rank $n$, is invariant under the action of $W\,.$ But on performing a Legendre transformation one obtains configurations invariant under an {\sl extended} affine Weyl group of rank $n+1\,.$ It is therefore natural to ask what is the origin of this extra symmetry that does not appear in the extended $\bigvee$-system. The answer lies in the precise nature of the Legendre transformation ${\hat z} \leftrightarrow z\,.$ For example, in the $A_n$ example the perpendicular direction $z_\perp$ is invariant under the affine translation
\[
{\hat z}_\perp \mapsto {\hat z}_\perp + 8 \pi i h_\beta (n+1)
\]
since the transformation is exponential. Thus in ${\hat z}$-space one has a symmetry that is not apparent in the original $z$-space. Together with the original action of $W$, one thus obtains an extended affine group action.

The construction in this paper is dependent on the existence of a small orbit and for exceptional Coxeter groups such orbits do not exist. However variants of the construction in this paper do exist: the small orbit condition is sufficient for the construction, but it is certainly not necessary, as the following example shows.

\begin{example} Consider the root system of the Coxeter group $F_4\,,$
\[
\mathcal{R}_{F_4} = \mathcal{R}_{long}\cup\mathcal{R}_{short}
\]
where
\begin{eqnarray*}
\mathcal{R}_{long} & = & \left\{ (\pm 1,\pm 1,0,0),{\rm ~and~permutations}\right\}\,,\\
\mathcal{R}_{short} & = & \left\{ (\pm 1,0,0,0),{\rm ~and ~permutations}\right\}\cup\left\{ (\pm\frac{1}{2},\pm\frac{1}{2},\pm\frac{1}{2},\pm\frac{1}{2}),{\rm ~and ~permutations}\right\}\,,\\
\end{eqnarray*}
These give the standard almost-dual prepotential
\[
F^\star = \frac{1}{4} \sum_{\alpha \in \mathcal{R}_{F_4}} \alpha(z)^2 \log\alpha(z)
\]
with
\[
2 h(z,z) = \sum_{\alpha \in \mathcal{R}_{F_4}} \alpha(z)^2
\]
and $h=9\,,$ the dual Coxeter number of $F_4\,.$ The long roots do {\rm not} form a small orbit, but it does have the following property: for $w_1\,,w_2\in \mathcal{R}_{long}$ with $w_1\neq\pm w_2\,,$

\begin{itemize}
\item[(i)] either $w_1+w_2 \in \mathcal{R}_{long}$ or $w_1-w_2 \in \mathcal{R}_{long}$
\end{itemize}
\noindent or
\begin{itemize}
\item[(ii)] $w_1$ and $w_2$ are perpendicular.
\end{itemize}

\noindent One can then extend the vectors in $\mathcal{R}_{long}$ to obtain the extended $\bigvee$-system
\[
\mathcal{U} = \mathcal{R}_{F_4} \cup
\{\pm(w+n)\,, w\in \mathcal{R}_{long}\} \cup \{ \pm n\}\,.
\]
with the respective multiplicities $h_\alpha = 1\,,\frac{1}{2}\,,3$ respectively. This gives an extended $F_4$ solution to the WDVV equations. By construction the 5-dimensional extended configuration is invariant under the Coxeter group $F_4\,.$ This configuration was obtained in \cite{FV2} by restricting the $E_8$ system onto a certain discriminant, in their notation this is the solution $(E_8,A_1^3)\,.$
\end{example}

\noindent Whether a similar construction may be applied to the remaining exceptional cases is under investigation. However the construction in \cite{DZ}, coupled with almost-duality, guarantees a trigonometric $\bigvee$-system for such exceptional cases (for a specific marked node in \cite{DZ}, and conjecturally for an arbitrary marked node). Whether such systems are the Legendre-transformed versions of some extended rational $\bigvee$-system is unknown, though it is natural to conjecture that they are. More generally, a natural question to ask is whether there is some direct map between rational $\bigvee$-systems and trigonometric $\bigvee$-systems, and if not, to find under what conditions it does exist.

Further examples too would be of interest. there has been recent work on the classification of $\bigvee$-systems \cite{L,ScV} and it would be interesting to see if extended version of these system exist. One could ask, for example, how the matroid for the extended systems can be constructed from the matroid of the original system. The complex-reflection/Shephard group examples recently constructed in \cite{L} would be a good place to start: these already have interesting symmetry groups automatically build into their construction.

The small orbit property also provides an explanation of the ad-hoc construction of elliptic $\bigvee$-systems \cite{St1} and elliptic solutions to the WDVV equations. These solutions have, as their leading term, a function that by itself is a solution to the WDVV equations of the form (\ref{rational}), but an irregular orbit had to be added, but which irregular orbit was unclear. It turns out that the irregular orbits are precisely small orbits. One observation coming from these results is that, for Coxeter groups, the existence of an irreducible quartic invariant polynomial is equivalent to the existence of a small orbit. Whether this is significant or just accidental is unclear.

The WDVV-equations and the rational solutions come from the commutativity, or zero-curvature relations, for the deformed connection
\[
\nabla_a = \partial_a + \kappa \sum_{\alpha\in\mathcal{U}} \frac{ (\alpha,a) }{(\alpha,z)} \alpha^\vee \otimes \alpha\,.
\]
The construction in this paper can also be thought of in terms of extending such a connection into a dimension higher. The geometry of such a construction also deserves to be studied. Such questions also appear in the Hurwitz space description. For example, consider the $A_n$-example and its extension. This construction corresponds to adding an extra term to the superpotential:
\[
\left.\prod_{i=0}^n (z-z^i)\right|_{\sum_{i=1}^n z^i=0}  \mapsto \left.\frac{ \prod_{i=0}^{n} (z-z_i) }{(z-z^\circ)^k}\right|_{\sum_{i=0}^{n} z_i - n z^\circ =0}
\]
and the geometry of Hurwitz theory requires that $k\in\pm\mathbb{N}\,.$ Algebraically this restriction is not required in the $\bigvee$-system.
It is here that the sign of $k$ effects the geometry (but not the algebra). If $k$ is positive this generates the extended affine Weyl orbit space and a solution that is almost dual to the corresponding Frobenius manifold. If $k$ is negative the superpotential no longer has a pole, but a multiple root. This corresponds to the induced Frobenius structures on discriminant surfaces with a larger manifold \cite{St2}. That such induced structures on discriminant generate solutions to the WDVV equations of the form (\ref{rational}) was proved in \cite{FV2}.

Finally, root systems and $\bigvee$-systems appear in many other places in mathematical physics: in the theory of Calogero-Moser and Schr\"odinger operators, for example \cite{SV}, and there are rational and trigonometric versions of both of these. Whether these are connected by a suitable Legendre transformation is unknown.

All these questions require further work.

\section*{Acknowledgements}

\noindent Richard Stedman was funded by the ESPRC Doctoral Training Grant EP/K503058/1.

\end{document}